\def\mid{|}
\newcommand{\eqref}[1]{(\ref{#1})}
\newtheorem{theorem}{Theorem}[section]
\newtheorem{lemma}[theorem]{Lemma}
\newtheorem{corollary}[theorem]{Corollary}
\newtheorem{proposition}[theorem]{Proposition}
\newcommand{\dnorm}{\varphi}
\newcommand{\pnorm}{\Phi}
\newcommand{\pr}{\mathbb{P}}
\newcommand{\prs}{\mathrm{P}}
\newcommand{\expec}{\mathbb{E}}
\newcommand{\expecs}{\mathrm{E}}
\newcommand{\var}{\operatorname{var}}
\newcommand{\cov}{\operatorname{cov}}
\newcommand{\Un}{\operatorname{Un}}
\newcommand{\reals}{\mathbb{R}}
\newcommand{\PLE}[1][]{\tilde\theta_{n #1}}
\newcommand{\OSE}[1][]{\hat\theta_{n #1}^{\mathrm{OSE}}}
\newcommand{\dell}{\dot{\ell}}
\newcommand{\ddell}{\ddot{\ell}}
\newcommand{\Score}{\dell_{\theta}}
\newcommand{\Scorem}{\dell_{\theta,m}}
\newcommand{\dto}{\mathop{\rightarrow}^{d}}
\newcommand{\diag}{\operatorname{diag}}
\newcommand{\tr}{\operatorname{tr}}
\newcommand{\ones}{\bolds{\iota}}
\newcommand{\Id}{I_p}
\newcommand{\hdot}{\circ}
\newcommand{\eps}{\varepsilon}
\newcommand{\Fac}{\mathcal{F}_{\mathrm{ac}}}
\newcommand{\rL}{\mathrm{L}}
\newcommand{\rd}{\mathrm{d}}
\newcommand{\modelpar}{\mathcal{P}_{\mathrm{km}}}
\newcommand{\model}{\mathcal{P}}
\newcommand{\scoreOpj}{\mathcal{O}_{\theta,j}}
\newcommand{\scoreOp}{\mathcal{O}_\theta}
\newcommand{\range}{\mathrm{R}}
\newcommand{\diagrem}{\mathrm{rd}}
\newcommand{\qdell}{q}
\begin{document}
\begin{frontmatter}

\title{Semiparametric Gaussian copula models:
Geometry and efficient rank-based estimation}
\runtitle{Semiparametric Gaussian copula models}

\begin{aug}
\author[A]{\fnms{Johan}~\snm{Segers}\corref{}\ead[label=JS]{Johan.Segers@UCLouvain.be}\thanksref{t4}},
\author[B]{\fnms{Ramon}~\snm{van den Akker}\ead[label=RvdA]{R.vdnAkker@TilburgUniversity.edu}}
\and
\author[B]{\fnms{Bas~J.~M.}~\snm{Werker}\ead[label=BW]{B.J.M.Werker@TilburgUniversity.edu}}
\runauthor{J. Segers, R. van den Akker and B.~J.~M. Werker}
\affiliation{Universit\'{e} catholique de Louvain,
Tilburg University and
Tilburg University}
\thankstext{t4}{Supported by contract ``Projet d'Actions de Recherche
Concert\'ees'' No. 12/17-045 of the ``Communaut\'e fran\c{c}aise de
Belgique'' and by IAP research network Grant nr. P7/06 of the Belgian
government (Belgian Science Policy).}
\address[A]{J. Segers\\
ISBA\\
Universit\'e catholique de Louvain\\
Voie du Roman Pays 20\\
B-1348 Louvain-la-Neuve\\
Belgium\\
\printead{JS}} 
\address[B]{R. van den Akker\\
B.~J.~M. Werker\\
CentER\\
Tilburg University\\
PO Box 90153\\
5000 LE Tilburg\\
The Netherlands\\
\printead{RvdA}\\
\phantom{E-mail:\ }\printead*{BW}}
\end{aug}

\received{\smonth{6} \syear{2013}}
\revised{\smonth{5} \syear{2014}}

%
\begin{abstract}
We propose, for multivariate Gaussian copula models with unknown
margins and structured correlation matrices, a rank-based,
semiparametrically efficient estimator for the Euclidean copula
parameter. This estimator is defined as a one-step update of a
rank-based pilot estimator in the direction of the efficient influence
function, which is calculated explicitly.
Moreover, finite-dimensional algebraic conditions are given that
completely characterize efficiency of the pseudo-likelihood estimator
and adaptivity of the model with respect to the unknown marginal distributions.
For correlation matrices structured according to a factor model, the
pseudo-likelihood estimator turns out to be semiparametrically
efficient. On the other hand, for Toeplitz correlation matrices, the
asymptotic relative efficiency of the pseudo-likelihood estimator 
can be as low as 20\%.
These findings are confirmed by Monte Carlo simulations.
We indicate how our results can be extended to joint regression models.
\end{abstract}

%
\begin{keyword}[class=AMS]
\kwd[Primary ]{62F12} 
\kwd{62G20} 
\kwd[; secondary ]{62B15} 
\kwd{62H20} 
\end{keyword}

\begin{keyword}
\kwd{Adaptivity}
\kwd{correlation matrix}
\kwd{influence function}
\kwd{quadratic form}
\kwd{ranks}
\kwd{score function}
\kwd{tangent space}
\end{keyword}
\end{frontmatter}

\section{Introduction}
\label{sec:introduction}

Let $\mathbf{X} = (X_1, \ldots, X_p)^\prime$ be a $p$-dimensional random vector
with absolutely continuous marginal distribution functions $F_1,\ldots, F_p$ and joint distribution function $F$. The copula, $C$, of $F$ is
the joint distribution function of the vector $\mathbf{U} = (U_1,\ldots,
U_p)^\prime$ with $U_j = F_j(X_j)$, uniformly distributed on $(0, 1)$. By
Sklar's theorem,
\[
F(\mathbf{x}) = C\bigl(F_1(x_1),\ldots,
F_p(x_p)\bigr),\qquad \mathbf{x} \in\reals^p,
\]
yielding a separation of $F$ into its margins $F_1,\ldots, F_p$ and
its copula $C$. The copula remains unchanged if strictly increasing
transformations are applied to the $p$ components of $\mathbf{X}$.

A semiparametric copula model for the law of the random vector $\mathbf{X}$
is a model where $F$ is allowed to have arbitrary, absolutely
continuous margins 
and a copula $C_\theta$ which belongs to a finite-dimensional
parametric family.
An important inference problem is the
development of an efficient\vadjust{\goodbreak} estimator of the copula parameter $\theta$
on the basis of
a random sample $\mathbf{X}_i = (X_{i1},\ldots,X_{ip})^\prime$,
$i=1,\ldots,n$.
The marginal distributions $F_1,\ldots, F_p$ 
are thus considered as infinite-dimensional nuisance parameters.
In accordance to the structure of the model, a desirable property for
the estimator of $\theta$ is that it is invariant with respect to
strictly increasing transformations applied to the $p$ components. This
is equivalent to the requirement that the estimator is measurable with
respect to the vectors of ranks,\vspace*{-2pt} $\mathbf{R}_i^{(n)}=
(R_{i1}^{(n)},\ldots,R_{ip}^{(n)})^\prime$, with
$R_{ij}^{(n)}$\vspace*{2pt}  the rank of $X_{ij}$ within the $j$th
marginal sample $X_{1j},\ldots,X_{nj}$; see also \citeauthor{hoff:07}
[(\citeyear{hoff:07}), Section~5], who shows that the ranks are \textit{G}-sufficient.

There exist a number of rank-based estimation strategies for $\theta$,
none of them guaranteed to be semiparametrically efficient.
The most common 
are method-of-moment type estimators
[\citet{oakes:86,genest:rivest:93,kluppelberg:kuhn:09,brahimi:necir:12,liu:etal:2012}],
minimum-distance estimators [\citet{tsukahara:05,liebscher:09}], and
the pseudo-likelihood estimator [\citet{oakes:94,genest:ghoudi:rivest:95}].
For vine copulas, the pseudo-likelihood estimator and variants thereof
are studied in \citet{haff:13}. An expectation--maximization algorithm
for a Gaussian copula mixture model is proposed in \citet{li:etal:11}.

Conditions for the efficiency of the pseudo-likelihood estimator are
derived in \citet{genest:werker:02}, where it is concluded that
efficiency is the exception rather than the rule. One notable exception
is the bivariate Gaussian copula model (see below), where the
pseudo-likelihood estimator is asymptotically equivalent to the normal
scores rank correlation coefficient, shown to be efficient in \citet
{klaassen:wellner:97}.

A semiparametrically efficient estimator for the copula parameter is
proposed in \citet{chen:fan:tsy:06}. However, the estimator is based on
parametric sieves for the unknown margins, so that the estimator is not
invariant under increasing transformations of the component variables,
that is, the estimator is not rank-based. Moreover, it requires the
choice of the orders of the sieves as tuning parameters. The approach
has been extended to Markov processes [\citet{chen:wu:yi:2009}] and to
bivariate survival data [\citet{cheng:zhou:chen:huang:2013}].

Besides the already mentioned paper by \citet{klaassen:wellner:97}, the
issue of efficient, rank-based estimation is taken up in \citet{hoff:niu:wellner:12} for the important class of semiparametric
Gaussian copula models with structured correlation matrices.
They derive the semiparametric lower bound to the asymptotic variance of
rank-based regular estimators for the copula parameter. 
However, they do not provide such an estimator. They also construct a
specific Gaussian copula model for which the pseudo-likelihood
estimator is not efficient. They conclude their paper with the
suggestion that the maximum rank-likelihood estimator in \citet{hoff:07}
may be efficient.

Following \citet{klaassen:wellner:97} and \citet{hoff:niu:wellner:12}, we
put the focus in this paper on semiparametric Gaussian copula models.
In the context of graphical models for high-dimensional random vectors,
these have been called ``nonparanormal'' models, that is, the variables
follow a joint normal distribution after a set of unknown monotone
transformations
[\citet{liu:etal:2012,xue:zou:2012}].
Allowing for covariates, semiparametric Gaussian copula models also
lead to semiparametric regression models
[\citet{song:2000,song:li:yuan:2008,basrak:klaassen:2013}].

The Gaussian copula $C_\theta$ is the copula of a $p$-variate Gaussian
distribution $N_p(0, R(\theta))$ with $p \times p$ positive definite
correlation matrix $R(\theta)$:
%
\begin{equation}\qquad
\label{eq:Cgauss} C_\theta(\mathbf{u}) = \pnorm_\theta\bigl(
\Phi^{-1}(u_1),\ldots,\Phi^{-1}(u_p)
\bigr),\qquad \mathbf{u}=(u_1,\ldots,u_p)\in(0,1)^p,
\end{equation}
with $\pnorm_\theta$ the $N_p(0, R(\theta))$ cumulative distribution
function and $\Phi^{-1}$ the standard normal quantile function.
In the unrestricted Gaussian copula model, $R(\theta)$ can be any
$p\times p$ positive definite correlation matrix.
Submodels arise by considering structured correlation matrices. In that
case, the dimension, $k$, of the parameter set is smaller than the
number of pairs of variables, $p(p-1)/2$.
As model for one observation, we thus consider
%
\begin{equation}
\label{eq:model} \model= ( \prs_{\theta,F_1,\ldots,F_p} \mid \theta\in\Theta,
F_1,\ldots,F_p\in\Fac ),\qquad \Theta\subset
\reals^k,
\end{equation}
where $\Fac$ denotes the set of absolutely continuous distributions on
the real line and $\prs_{\theta,F_1,\ldots,F_p}$ denotes the law of the
random vector
$\mathbf{X}$ which has copula \eqref{eq:Cgauss} and margins $F_1,\ldots,F_p$.

For the model $\model$, we compute the efficient score and influence
functions and the efficient information matrix for
$\theta$.
The computations are based on a detailed analysis of the tangent space
structure of the model. The practical interest of this analysis is that
it yields an efficient semiparametric, rank-based estimator for $\theta
$. Starting from an initial, rank-based, $\sqrt{n}$-consistent
estimator, our estimator is defined as a one-step update in the
direction of the estimated efficient influence function.
The idea of constructing an efficient estimator via a one-step update
is already to be found in \citeauthor{bickel:1982}
[(\citeyear{bickel:1982}), Remark~1], who refers in
turn to \citeauthor{lecam:1969} [(\citeyear{lecam:1969}), Theorem~4].

Existence of an initial estimator is usually no problem: take the
pseudo-likelihood estimator, use a minimum-distance estimator, %
or express the parameter components as functions of the entries of the
correlation matrix and use a plug-in estimator based on the normal
scores correlation coefficients. In the semiparametric Gaussian copula
model, the update step is easy to implement, relying on simple matrix
algebra. 

We thereby provide a positive answer to the conjecture formulated in
\citet{hoff:niu:wellner:12} whether it is possible to attain the
semiparametric lower bound using a rank-based estimator. We wish to
stress the fact that, in contrast to the sieve-based estimator in \citet{chen:fan:tsy:06}, our estimator is rank-based, and thus invariant with
respect to increasing transformations of the component variables, in
accordance with the group structure of the model. Note, however, that
the methodology in \citet{chen:fan:tsy:06} also applies to general, not
necessarily Gaussian, copula models.

By restricting attention to estimators with influence functions of a
certain form, we construct an algebraic framework which allows for
particularly simple, finite-dimensional conditions on the
parametrization $\theta\mapsto R(\theta)$ for efficiency of the
pseudo-likelihood estimator 
and for adaptivity of the model. These are detailed for a large number
of examples. The pseudo-likelihood estimator turns out to be efficient
not only in the unrestricted model, confirming a remark in \citet
{klaassen:wellner:97}, but also in the often-used class of factor
models. On the other hand, for correlation matrices with a Toeplitz
structure, the pseudo-likelihood estimator can be quite inefficient,
with an asymptotic relative efficiency as low as 20\%. Although \citet{hoff:niu:wellner:12} already identified a Gaussian copula model for
which the pseudo-likelihood estimator is inefficient, the asymptotic
relative efficiency of the pseudo-likelihood estimator in their example
was still not far
from 100\%. The asymptotic results are complemented by a Monte Carlo
study, 
confirming the above theoretical findings for small samples, even in
high dimensions.


The outline of the paper is as follows. In Section~\ref{SEC:TANGENT},
we study the model's tangent space, culminating in the calculation of
the efficient score function and information matrix for $\theta$. These
serve to define the one-step estimator in Section~\ref{sec:estimator},
where its semiparametric efficiency is proved. Simple criteria for
efficiency of estimators and of adaptivity of the model are established
in Section~\ref{SEC:QUADFUN}. Examples and numerical illustrations are
provided in Section~\ref{SEC:EXAMPLES}. Section~\ref{sec:conclusions}
concludes and discusses extensions to other copula families and to
joint regression models. Detailed proofs are deferred to the
Appendices, which are collected in the supplement [\citet{supplement}].

\section{Tangent space and efficient score}
\label{SEC:TANGENT} 

The purpose of this section is to compute the semiparametric lower
bound for estimating the Gaussian copula parameter~$\theta$. Main keys
to obtain this lower bound are the tangent space of the semiparametric
Gaussian copula model $\model$ and the efficient score function for
$\theta$; see \citeauthor{bkrw:93} [(\citeyear{bkrw:93}), Chapters 2--3] and
\citeauthor{vdvaart:00} [(\citeyear
{vdvaart:00}), Chapter~25] for detailed expositions on these notions.
Readers mainly interested in our rank-based efficient estimator for
$\theta$ may want to jump to Section~\ref{sec:estimator} on a first reading.

Section~\ref{sec:assnot} states our assumptions and introduces notation
that will be used throughout. Section~\ref{sec:knownmarginals} shortly
discusses the Gaussian copula model with known marginals. In
Sections \ref{sec:tangentspace} and \ref{sec:effscore}, we determine
the tangent space and efficient score function, respectively. The
tangent space theory in Section~\ref{sec:tangentspace} is inspired upon
the one for bivariate semiparametric copula models in \citeauthor{bkrw:93}
[(\citeyear{bkrw:93}), Section~4.7].

\subsection{Assumptions and notation}
\label{sec:assnot}

The log density of the Gaussian copula \eqref{eq:Cgauss} with $p
\times
p$ correlation matrix $R(\theta)$ is given by
%
\begin{equation}
\label{eq:logdensity} \ell(\mathbf{u};\theta) = \log c_\theta(\mathbf{u}) = -
\tfrac{1}{2}\log\bigl( \det R(\theta) \bigr) - \tfrac{1}{2}
\mathbf{z}^\prime\bigl( R^{-1}(\theta) - \Id\bigr) \mathbf{z},
\end{equation}
for $\mathbf{u}\in(0,1)^p$, where $\Id$ is the $p\times p$ identity matrix
and where $\mathbf{z}=(z_1,\ldots,z_p)^\prime$ with $z_j = \Phi^{-1}(u_j)$.
Nonsingularity of the correlation matrix is part of the following assumption.

\begin{assumption}
\label{ass:parametrization}
Suppose $\Theta\subset\mathbb{R}^k$ is open and:
\begin{longlist}[(iii)]
\item[(i)]
the mapping $\theta\mapsto R(\theta)$ is one-to-one;
\item[(ii)]
for all $\theta\in\Theta$, the inverse $S(\theta)=R^{-1}(\theta)$ exists;
\item[(iii)]
for all $\theta\in\Theta$, the matrices of partial derivatives $\dot
R_1(\theta),\ldots,\dot R_k(\theta)$, defined by $\dot R_{ m, ij
}(\theta) =\partial
R_{ij}(\theta)/\partial\theta_m$, for $m=1,\ldots,k$ and
$i,j=1,\ldots,p$, exist and are continuous in $\theta$;
\item[(iv)]
for all $\theta\in\Theta$, the matrices $ \dot R_1(\theta),\ldots,\dot
R_k(\theta)$ are linearly independent.
\end{longlist}
\end{assumption}

Let us also define $p\times p$ matrices $\dot S_m(\theta)$ by $\dot
S_{ m, ij }(\theta) =\partial S_{ij}(\theta)/\partial\theta_m$.
These derivatives
satisfy $\dot S_m(\theta)= - S(\theta)   \dot R_m(\theta)
S(\theta
)$, which follows from differentiating $R(\theta)   S(\theta) = \Id$
[\citet{magnus:neudecker:99}, Section~8.4].

The $p$-dimensional vector $\mathbf{X}=(X_1,\ldots,X_p)^\prime$ denotes, as
in the \hyperref[sec:introduction]{Introduction}, a~random vector with copula (\ref{eq:Cgauss}) and
margins $F_1,\ldots,F_p\in\Fac$. Its law is denoted by $\prs_{\theta,F_1,\ldots,F_p}$ and expectations with respect to
this law are denoted by $\expecs_{\theta,F_1,\ldots,F_p}$.
In case all margins are uniform on $[0,1]$, notation $\Un[0, 1]$, we
use $\mathbf{U}$, $\prs_\theta$ and $\expecs_\theta$ as notation.
Let $\mathbf{Z}=(Z_1,\ldots,Z_p)^\prime$ be defined by $Z_j=\Phi
^{-1}(U_j)$ and
note that $\mathbf{Z}\sim\pnorm_\theta$ under $\prs_\theta$.

Moreover, we consider a measurable space $(\Omega,\mathcal{F})$
supporting probability measures $\pr_{\theta,F_1,\ldots,F_p}$, for
$\theta\in\Theta$ and $F_1,\ldots,F_p\in\Fac$, and i.i.d. random
vectors $\mathbf{X}_i$, $i\in\mathbb{N}$, each with law $\prs_{\theta,F_1,\ldots,F_p}$ under $\pr_{\theta,F_1,\ldots,F_p}$. Expectations with
respect to $\pr_{\theta,F_1,\ldots,F_p}$ are denoted by $\expec
_{\theta,F_1,\ldots,F_p}$. Furthermore, $\pr_\theta$ is shorthand for $\pr
_{\theta,\Un[0,1],\ldots,\Un[0,1]}$, for which expectations are denoted
by $\expec_\theta$.

\subsection{The Gaussian copula model with known margins}
\label{sec:knownmarginals}

The starting point of the analysis is the case that the margins $F_1,\ldots, F_p \in\Fac$ are known. In particular, we compute the Fisher
information matrix for $\theta$ in this case
. Due to the transformation structure of the model, it suffices to
consider uniform margins, that is, to consider the model $\modelpar
= ( \prs_\theta \mid  \theta\in\Theta )$.

Under Assumption~\ref{ass:parametrization}, the score
$\Score(\mathbf{u};\theta)=(\Scorem(\mathbf{u};\theta))_{m=1}^k$ is given by
%
\begin{equation}
\label{eqn:score} \Scorem(\mathbf{u};\theta) = \frac{\partial}{\partial\theta_m} \ell(\mathbf{u};
\theta) = -\frac{1}{2}\tr \bigl( S(\theta) \dot R_m(\theta)
\bigr) -\frac{1}{2} \mathbf{z}^\prime\dot S_m(\theta)
\mathbf{z},
\end{equation}
for $\mathbf{u}\in(0,1)^p$, where the partial derivative of $\det
R(\theta
)$ follows from Jacobi's formula [\citet{magnus:neudecker:99}, Section~8.3]. The $k\times k$ Fisher information matrix is
defined by $I_{}(\theta)=\expecs_\theta[\Score\Score^\prime(\mathbf
{U};\theta)]$.

To obtain a convenient representation of $I_{}(\theta)$, we introduce an
inner product on the linear space $\operatorname{Sym}(p)$ of real
symmetric $p\times p$
matrices. For $A,B\in\operatorname{Sym}(p)$, put
%
\begin{equation}
\label{eq:innerProductDef} \langle{A},{B}\rangle_\theta = \cov_\theta
\bigl(\tfrac{1}2\mathbf{Z}^\prime A\mathbf{Z},\tfrac{1}2
\mathbf {Z}^\prime B\mathbf {Z} \bigr) = \tfrac{1}2\tr \bigl( A R(
\theta) B R(\theta) \bigr),
\end{equation}
the covariance being calculated for $\mathbf{Z} \sim\pnorm_\theta$.
The latter equality follows from Theorem~12.10.12 in \citet
{magnus:neudecker:99}.
It is easily verified that $\langle{ \cdot },{ \cdot }\rangle
_\theta$ defines an
inner product on $\operatorname{Sym}(p)$. In particular, if $A\in
\operatorname{Sym}(p)$ is such that
$\langle{A},{A}\rangle_\theta=0$, then $\mathbf{Z}^\prime A\mathbf{Z}$ is
almost surely equal to
a constant and thus, $R(\theta)$ being nonsingular (Assumption~\ref
{ass:parametrization}), $A=0$.

From \eqref{eqn:score} and \eqref{eq:innerProductDef}, we obtain, for
$m,m^\prime=1,\ldots,k$,
%
\begin{equation}
\label{eqn:Fisher_parametric} I_{m m^\prime}(\theta) =\bigl\langle{-\dot S_m(
\theta)},{-\dot S_{m^\prime}(\theta)}\bigr\rangle _\theta,
\end{equation}
which is continuous in $\theta$. 
Note that $I_{}(\theta)$ is the Gram matrix associated to the matrices
$-\dot S_1(\theta),\ldots,-\dot S_k(\theta)$,\vspace*{1pt} using \eqref
{eq:innerProductDef} as inner product. Since Assumption~\ref
{ass:parametrization} implies linear independence of $-\dot S_1(\theta
),\ldots,-\dot S_k(\theta)$ (see part A of the proof of
Proposition~\ref
{prop:effscore} below), the information matrix $I_{}(\theta)$ is nonsingular.

From these observations, it follows that the Gaussian copula model with
known margins is regular [\citet{bkrw:93}, Definition~2.1.1 and
Proposition~2.1.1]. For ease of reference, we state this fact
in the following lemma.

\begin{lemma}
\label{prop:regular_parametric_model}
If the parametrization $\theta\mapsto R(\theta)$ satisfies
Assumption~\ref{ass:parametrization}, then the parametric Gaussian
copula model with known, uniform margins is regular.
\end{lemma}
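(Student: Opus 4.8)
The plan is to verify Bickel–Klaassen–Ritov–Wellner's definition of a regular parametric model (\citet[][Definition~2.1.1]{bkrw:93}) directly, namely: (a) differentiability in quadratic mean of $\theta \mapsto \prs_\theta$ at each $\theta \in \Theta$, with $L_2$-derivative equal to the score $\Score(\,\cdot\,;\theta)$ given in~\eqref{eqn:score}; (b) continuity of $\theta \mapsto \Score(\,\cdot\,;\theta)$ as a map into $L_2(\prs_\theta)$; and (c) nonsingularity and continuity of the Fisher information matrix $\infoMat{}$. Items (b) and (c) are essentially already done in the text: (c) is exactly the discussion following~\eqref{eqn:Fisher_parametric}, where $\infoMat{}$ is identified as the Gram matrix of the linearly independent family $-\dot S_1(\theta),\dots,-\dot S_k(\theta)$ under the inner product~\eqref{eq:innerProductDef}, hence positive definite, and where continuity in $\theta$ follows from Assumption~\ref{ass:parametrization}(iii) via the continuity of $\theta \mapsto \dot S_m(\theta) = -S(\theta)\dot R_m(\theta)S(\theta)$; item (b) follows by the same token, since $\Scorem(\vc{u};\theta)$ is, for fixed $\vc{u}$, a polynomial in the entries of $S(\theta)$ and $\dot S_m(\theta)$, and one can pass to the $L_2(\prs_\theta)$-limit using a dominating-function argument (the relevant $L_2$ norms are continuous by the Gram-matrix computation, so convergence in $L_2$ follows from a.s.\ convergence by Vitali/Scheff\'e-type reasoning).

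The substantive point is (a), differentiability in quadratic mean. Here I would invoke the standard sufficient condition for DQM for a dominated family with smooth densities: if $\vc{u} \mapsto \sqrt{c_\theta(\vc{u})}$ is continuously differentiable in $\theta$ for (Lebesgue-)almost every $\vc{u}$ and the Fisher information $\theta \mapsto \infoMat{} = \expecs_\theta[\Score\Score^\prime(\vc{U};\theta)]$ is finite and continuous, then the model is DQM with the stated $L_2$-derivative; see \citet[][Lemma~7.6]{vdvaart:00} or \citet[][Proposition~2.1.1]{bkrw:93}. From~\eqref{eq:logdensity}, $\log c_\theta(\vc{u})$ depends on $\theta$ only through $\det R(\theta)$ and through the quadratic form $\vc z^\prime(S(\theta)-\Id)\vc z$; under Assumption~\ref{ass:parametrization}(i)--(iii) both are continuously differentiable in $\theta$ (using $\dot S_m = -S\dot R_m S$), so $\theta \mapsto c_\theta(\vc u)$ and hence $\theta \mapsto \sqrt{c_\theta(\vc u)}$ is continuously differentiable for every fixed $\vc u \in (0,1)^p$. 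Combined with finiteness and continuity of $\infoMat{}$ established above, the cited lemma yields DQM.

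The main obstacle, and the only place requiring genuine care, is controlling the integrability needed to move from pointwise differentiability of $\sqrt{c_\theta}$ to the $L_2$ convergence defining DQM: the quadratic form $\vc z^\prime \dot S_m(\theta)\vc z$ is unbounded in $\vc z$, so one must check that the difference quotients $(\sqrt{c_{\theta+h}} - \sqrt{c_\theta})/\|h\|$ are uniformly (in small $h$) dominated in $L_2(\text{Leb})$, or equivalently exploit that Gaussian quadratic forms have all moments with norms depending continuously on $\theta$. This is exactly what the explicit formula~\eqref{eq:innerProductDef} delivers: $\|{-\dot S_m(\theta)}\|^2 = \tfrac12\tr(\dot S_m R \dot S_m R)$ is finite and continuous, so the relevant second moments are locally bounded uniformly in $\theta$, which supplies the required domination. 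Given these pieces, the proof is a short assembly: verify continuous differentiability of $\sqrt{c_\theta(\vc u)}$ pointwise, cite finiteness and continuity of $\infoMat{}$ from~\eqref{eqn:Fisher_parametric} and the surrounding discussion, invoke \citet[][Lemma~7.6]{vdvaart:00} for DQM, and conclude via \citet[][Definition~2.1.1]{bkrw:93} that the model is regular.
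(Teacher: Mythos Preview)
Your proposal is correct and follows essentially the same approach as the paper: the paper's argument consists of the text preceding the lemma, where the score is computed explicitly, the Fisher information is identified as the Gram matrix of the linearly independent matrices $-\dot S_1(\theta),\dots,-\dot S_k(\theta)$ (hence nonsingular and continuous), and regularity is then concluded by a direct citation of \citet[][Definition~2.1.1 and Proposition~2.1.1]{bkrw:93}. You simply spell out in more detail the verification of the hypotheses of that proposition (pointwise $C^1$-smoothness of $\sqrt{c_\theta}$, continuity and finiteness of $\infoMat{}$), which is precisely what the paper leaves implicit.
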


\begin{remark}
\label{rem:regular_parametric_model}
Regularity of $\modelpar$ has some useful consequences:
\begin{longlist}[(ii)]
\item[(i)]
The score function has zero expectation, $\expecs_\theta[\Score(\mathbf
{U};\theta)] = 0$. This can also be seen from \eqref{eqn:score} and
$\expecs_\theta( \mathbf{Z}^\prime A \mathbf{Z}) = \tr( A   R(\theta))$%
.
\item[(ii)]
By the H\'{a}jek--Le Cam convolution theorem, the inverse of the Fisher
information matrix, $I^{-1}(\theta)$, constitutes a lower bound to the
asymptotic variance of regular estimators of $\theta$ in the model
$\modelpar$ [see, e.g., \citet{vdvaart:00}, Chapter~8].
\end{longlist}
\end{remark}

\subsection{Tangent space}
\label{sec:tangentspace}

In this section, we derive the tangent space of the semiparametric
Gaussian copula model $\model$. The structure of the space is similar
to the one of the bivariate semiparametric Gaussian copula model in
\citet{klaassen:wellner:97}. In Section~\ref{sec:effscore}, we use this
tangent space to calculate the efficient score for the copula parameter
$\theta$. In turn, the efficient score determines the semiparametric
lower bound for the asymptotic variance of regular estimators of
$\theta$.

Informally, the tangent space at $\prs_{\theta,F_1,\ldots,F_p}\in
\model$
is given by the collection of score functions of parametric submodels
of $\model$. Such score functions can be thought of as functions on
$\reals^p$ of the form
%
\begin{equation}
\label{eq:deriv:pointwise} \mathbf{x} \mapsto \frac{\partial}{\partial\eta} \log p_{\theta+\eta\alpha, F_{1,\eta},\ldots,F_{p,\eta}}(
\mathbf{x})\bigg|_{\eta=0}.
\end{equation}
Here, $\alpha\in\reals^k$, while $p_{\theta+\eta\alpha, F_{1,\eta
},\ldots,F_{p,\eta}}$ is the density of $\prs_{\theta+\eta\alpha,
F_{1,\eta},\ldots,F_{p,\eta}} \in\model$, depending on a real parameter
$\eta$ taking values in a neighborhood of $0$. The marginal
distributions are parametrized through paths $\eta\mapsto F_{j,\eta}$
in $\Fac$ that pass through $F_j$ at $\eta=0$.

The tangent space falls apart into two subspaces:
\begin{itemize}
\item[--]
a parametric part, arising from score functions of parametric submodels
for which the margins are constant, $F_{j,\eta} = F_j$;
\item[--]
a nonparametric part, arising from score functions of parametric
submodels for which the copula parameter is constant, $\alpha= 0$.
\end{itemize}
The parametric part corresponds in fact to the linear span of the score
functions in the parametric Gaussian copula model with known margins
. The nonparametric part describes the additional part of the model
stemming from the margins being unknown.

Formally, the tangent space is a subspace of $\rL_2(\prs_{\theta,F_1,\ldots,F_p})$, the space of square-integrable functions with
respect to $\prs_{\theta,F_1,\ldots,F_p}$. The pointwise derivatives
\eqref{eq:deriv:pointwise} will be replaced by derivatives in quadratic
mean. The nonparametric part of the tangent space is described most
conveniently as the image of a bounded linear operator, called score
operator. It is the description and the analysis of this score operator
that constitutes the gist of this section.

The density of $\prs_{\theta,F_1,\ldots,F_p} \in\model$ is given by
\[
p_{\theta, F_{1},\ldots,F_{p}}(\mathbf{x}) = c_{\theta} \bigl( F_{1}(x_1),
\ldots, F_{p}(x_p) \bigr) \prod
_{j=1}^p f_{j}(x_j),\qquad
\mathbf{x} \in\reals^p,
\]
with $f_1,\ldots, f_p$ the densities of $F_1,\ldots, F_p$,
respectively. If $\eta\mapsto F_{j,\eta}(x_j)$ is differentiable at
$\eta=0$, we obtain
\begin{eqnarray*}
&&\frac{\partial}{\partial\eta} \log p_{\theta, F_{1,\eta},\ldots,F_{p,\eta}}
(\mathbf{x}) \bigg|_{\eta=0}
\label{eqn:tangentspaceinformaldellj}
\\
&&\qquad=\sum_{j=1}^p \biggl\{ \frac{\partial}{\partial\eta}
\log f_{j,\eta}(x_j) \bigg|_{\eta=0} + \dot\ell_j
\bigl( F_{1}(x_1),\ldots, F_{p}(x_p);
\theta\bigr) \frac
{\partial
}{\partial\eta} F_{j,\eta}(x_j)\bigg|_{\eta=0}
\biggr\},
\end{eqnarray*}
with, for $j=1,\ldots,p$ and $\mathbf{u}\in(0,1)^p$,
%
\begin{equation}
\label{eqn:ellj} \dell_j(\mathbf{u};\theta) = \frac{\partial}{\partial u_j} \ell(
\mathbf{u};\theta) = \frac{z_j}{\dnorm(z_j)} - \sum_{i=1}^p
S_{ij}(\theta) \frac{z_i}{\dnorm(z_j)},
\end{equation}
where $\dnorm$ denotes the standard normal density.
Note that, for $u_j\in(0,1)$,
%
\begin{equation}
\label{eqn:condmean_dotellj} \expecs_\theta \bigl[ \dell_j(\mathbf{U};
\theta) | U_j=u_j \bigr] = \frac{z_j}{\dnorm(z_j)} \Biggl(1 -
\sum_{i=1}^p S_{ij}(
\theta)R_{ij}(\theta) \Biggr) = 0.
\end{equation}

The above formulas motivate the introduction of the following linear
operators, which together will constitute the above mentioned score
operator. Let $\rL_2^0[0,1]$ be the subspace of $\rL_2[0,1]=\rL_2(
[0,1],\mathcal{B}_{[0,1]},\rd\lambda)$ resulting from the restriction
$\int h(\lambda)   \,\rd\lambda= 0$ for $h \in\rL_2[0, 1]$.
For $j=1,\ldots,p$, we introduce linear operators $\scoreOpj: \rL
_2^0[0, 1] \to\rL_2(\prs_\theta)$ by
%
\begin{equation}
\label{eqn:scoreoperator} \scoreOpj h = [ \scoreOpj h ](\mathbf{U}) = h(U_j) +
\dot\ell_j (\mathbf{U};\theta) H(U_j), 
\end{equation}
where $H(u) = \int_0^{u} h(\lambda) \,\rd\lambda$ and where $\mathbf{U}$ is
the identity mapping on $(0, 1)^p$. The claim that the random variable
on the right-hand side has a finite variance for $\mathbf{U} \sim\prs
_\theta$ is part of Lemma~\ref{prop:score_operator}.

The score operator itself, $\scoreOp$, has domain $(\rL_2^0[0, 1])^p$
and is
defined by
\[
\scoreOp\mathbf{h} = \sum_{j=1}^p
\scoreOpj h_j, \qquad\mathbf{h} = (h_1,\ldots,h_p)
\in\bigl(\rL_2^0[0, 1]\bigr)^p.
\]
Lemma~\ref{prop:score_operator} will present basic properties of
$\scoreOpj$ and $\scoreOp$. A formal description of the tangent space
via the score operator will be given in Proposition~\ref
{prop:tangent_space_full}.

To this end, we first need to introduce some additional notation.
For $i,j=1,\ldots,p$ and $\mathbf{u}\in(0,1)^p$, we define
\begin{eqnarray*}
\ddell_{ij}(\mathbf{u};\theta) &=& \frac{\partial}{\partial u_i}
\dell_j(\mathbf{u};\theta)
\\
&=& \cases{\displaystyle \frac{z_j^2+1 -S_{jj}(\theta)}{\dnorm^2(z_j)} - \sum_{t=1}^p
S_{tj}(\theta) \frac{z_t z_j}{\dnorm^2(z_j)}, &\quad $\mbox{if $i=j$;}$ \vspace*{2pt}
\cr
\displaystyle -\frac{S_{ij}(\theta)}{\dnorm(z_i)   \dnorm(z_j)}, & \quad$\mbox{if $i \ne j$.}$}
\end{eqnarray*}
For $u_j\in(0,1)$, we have
%
\begin{eqnarray}
\label{eqn:Ijj} I_{jj}(u_j;\theta) &=&
\expecs_\theta \bigl[ \dell_j^2(\mathbf{U};
\theta) | U_j=u_j \bigr]
\nonumber
\\[-8pt]
\\[-8pt]
\nonumber
&=& -\expecs_\theta \bigl[ \ddell_{jj}(\mathbf{U};\theta) |
U_j=u_j \bigr] = \frac{ S_{jj}(\theta)-1}{\dnorm^2(z_j)}.
\end{eqnarray}
From well-known results on Mill's ratio [\citet{gordon:41}], we obtain
the bound $ 1 / \dnorm(z_j) \leq M \{ u_j(1-u_j) \}^{-1}$, for all $u_j
\in(0, 1)$ and some constant $M > 0$. Hence, under Assumption~\ref
{ass:parametrization}, there exists a constant $M_\theta>0$ such that
%
\begin{equation}
\label{eqn:bound_Ijj} I_{jj}(u_j;\theta)
 \leq\frac{M_\theta}{\{ u_j (1-u_j) \}^{2}},\qquad
u_j \in(0, 1).
\end{equation}
This bound is exploited in the proof of the following lemma, which
states that $\scoreOp$ is a continuously invertible operator
from $(\rL_2^0[0, 1])^p$ into $\rL_2^0(\prs_\theta)$. Here we equip
$(\rL_2^0[0, 1])^p$ with the
inner product $\langle\mathbf{g}, \mathbf{h} \rangle= \sum_{j=1}^p \int_0^1
g_j(\lambda)   h_j(\lambda)   \,\rd\lambda$ for $\mathbf{g}, \mathbf{h}
\in(\rL_2^0[0, 1])^p
$, while
$\rL_2^0(\prs_\theta)$ is the subspace of $\rL_2(\prs_\theta)$
resulting from the restriction $\expecs_\theta[ f(\mathbf{U}) ] = 0$ for
$f \in\rL_2(\prs_\theta)$.

\begin{lemma}
\label{prop:score_operator}
Let $\theta\mapsto R(\theta)$ be a parametrization that satisfies
Assumption~\ref{ass:parametrization} and let $\theta\in\Theta$:
%
\begin{longlist}[(a)]
\item[(a)]
The map $\scoreOpj$, $j = 1,\ldots, p$, is a bounded operator from
$\rL
_2^0[0,1]$ into $\rL_2^0(\prs_\theta)$. The map $\scoreOp$ is a bounded
operator from $(\rL_2^0[0, 1])^p$ into $\rL_2^0(\prs_\theta)$.
\item[(b)]
The operator $\scoreOp$ is continuously invertible on its range,
$\range\scoreOp$, with inverse $\scoreOp^{-1} \dvtx \rL_2^0( \prs
_\theta)
\to(\rL_2^0[0, 1])^p$ given by
\[
\bigl[\scoreOp^{-1} f\bigr]_j(u_j) =
\expecs_\theta\bigl[ f( \mathbf{U} ) \mid U_j =
u_j \bigr],\qquad u_j \in(0, 1), j = 1,\ldots, p.
\]
%
\end{longlist}
\end{lemma}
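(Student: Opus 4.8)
The plan is to prove the two parts of Lemma~\ref{prop:score_operator} in turn, using the pointwise formula~\eqref{eqn:scoreoperator} for $\scoreOpj$, the conditional-mean identity~\eqref{eqn:condmean_dotellj}, and the variance bound implicit in~\eqref{eqn:bound_Ijj}. For part~(a), fix $j$ and $h\in\rL_2^0[0,1]$. First I would check that $\scoreOpj h$ lands in $\rL_2^0(\prs_\theta)$: its $\prs_\theta$-expectation vanishes because $\expecs_\theta[h(U_j)]=\int_0^1 h=0$ and, conditioning on $U_j$ and using~\eqref{eqn:condmean_dotellj}, $\expecs_\theta[\dell_j(\vc U;\theta)H(U_j)]=\expecs_\theta\bigl[H(U_j)\expecs_\theta[\dell_j(\vc U;\theta)\mid U_j]\bigr]=0$. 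Next, boundedness: by the elementary inequality $(a+b)^2\le 2a^2+2b^2$,
\[
  \expecs_\theta\bigl[(\scoreOpj h)^2(\vc U)\bigr]
  \le 2\int_0^1 h^2(u)\,\rd u
  + 2\,\expecs_\theta\bigl[\dell_j^2(\vc U;\theta)\,H^2(U_j)\bigr],
\]
and the second term, after conditioning on $U_j$, equals $2\int_0^1 I_{jj}(u;\theta)\,H^2(u)\,\rd u$ by~\eqref{eqn:Ijj}. Using the bound~\eqref{eqn:bound_Ijj} this is at most $2M_\theta\int_0^1 \{u(1-u)\}^{-2}H^2(u)\,\rd u$, and since $H(0)=H(1)=0$ with $H'=h$, a Hardy-type inequality (or a direct estimate splitting $[0,1/2]$ and $[1/2,1]$ and using $H(u)=\int_0^u h$, $H(u)=-\int_u^1 h$ respectively together with Cauchy--Schwarz) bounds this by a constant times $\int_0^1 h^2$. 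Hence $\|\scoreOpj h\|_{\prs_\theta}\le c_\theta\|h\|$, proving $\scoreOpj$ is bounded; boundedness of $\scoreOp=\sum_j\scoreOpj$ follows by the triangle inequality.

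For part~(b), the key observation is that $\scoreOpj$ followed by ``take the conditional expectation given $U_j$'' returns $h$: conditioning~\eqref{eqn:scoreoperator} on $U_j=u_j$ and invoking~\eqref{eqn:condmean_dotellj} gives $\expecs_\theta[(\scoreOpj h)(\vc U)\mid U_j=u_j]=h(u_j)+0=h(u_j)$. More generally, for $\vc h=(h_1,\dots,h_p)\in\pL$, conditioning $\scoreOp\vc h$ on $U_j$ yields $h_j(u_j)+\sum_{i\ne j}\expecs_\theta[(\scoreOpi h_i)(\vc U)\mid U_j=u_j]$, and the cross terms need not vanish; nevertheless the proposed formula $[\scoreOp^{-1}f]_j(u_j)=\expecs_\theta[f(\vc U)\mid U_j=u_j]$ is claimed to be a genuine two-sided inverse on $\range\scoreOp$. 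To verify it I would first confirm that for $f=\scoreOp\vc h$ this map recovers $\vc h$ — this is precisely the assertion that the operator $\vc h\mapsto\bigl(\expecs_\theta[(\scoreOp\vc h)(\vc U)\mid U_j=\cdot]\bigr)_{j=1}^p$ equals the identity on $\pL$, i.e.\ that the cross terms $\expecs_\theta[(\scoreOpi h_i)(\vc U)\mid U_j]$ for $i\ne j$ sum to zero componentwise; this will follow from a direct computation using the explicit form~\eqref{eqn:ellj} of $\dell_i$ and the Gaussian conditional structure of $\vc Z$ under $\prs_\theta$, the point being that $\expecs_\theta[\dell_i(\vc U;\theta)H_i(U_i)\mid U_j=u_j]$ produces a term proportional to $z_j$ that is cancelled by the off-diagonal contribution of the $h_i(U_i)$ piece only after summing — so this needs care. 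Granting injectivity and surjectivity onto $\range\scoreOp$, continuous invertibility then means $\|\vc h\|\lesssim\|\scoreOp\vc h\|_{\prs_\theta}$; one extracts this from the inversion formula by noting $\|[\scoreOp^{-1}f]_j\|_{\rL_2[0,1]}=\|\expecs_\theta[f(\vc U)\mid U_j=\cdot]\|_{\rL_2[0,1]}\le\|f\|_{\prs_\theta}$ (conditional expectation is a contraction), so that $\|\scoreOp^{-1}\|\le\sqrt p$ and $\scoreOp^{-1}$ is bounded.

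The main obstacle is the identity in part~(b) that the conditional-expectation map inverts $\scoreOp$ on the full $p$-dimensional domain: individually $\scoreOpj$ is inverted by conditioning on $U_j$, but for $\scoreOp\vc h=\sum_j\scoreOpj h_j$ one must show the off-diagonal conditional expectations $\expecs_\theta[(\scoreOpi h_i)(\vc U)\mid U_j]$, $i\ne j$, contribute nothing to the $j$th component of the recovered vector. I expect this to hinge on the precise algebraic form of $\dell_i$ in~\eqref{eqn:ellj}, on the conditional distribution of $\vc Z\mid Z_j$ being Gaussian with mean linear in $z_j$ and covariance free of $z_j$, and ultimately on the correlation-matrix identities $\sum_i S_{ij}R_{ij}=1$ already used in~\eqref{eqn:condmean_dotellj}; making this cancellation transparent — perhaps by first establishing that $\range\scoreOpi$ for $i\ne j$ consists of functions whose $U_j$-conditional expectation vanishes — is the crux of the argument, and the secondary subtlety is the Hardy inequality needed for boundedness in part~(a), which relies essentially on the mean-zero constraint forcing $H(0)=H(1)=0$.
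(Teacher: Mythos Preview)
Your treatment of part~(a) is essentially the paper's: bound the second moment of $\scoreOpj h$, condition on $U_j$ to turn $\dell_j^2$ into $I_{jj}$, apply~\eqref{eqn:bound_Ijj}, and finish with a two-sided Hardy inequality (the paper states this as Proposition~\ref{prop:Hardy:symmetric}; your split at $1/2$ is equivalent). The paper computes the second moment exactly rather than via $(a+b)^2\le 2a^2+2b^2$, since the cross term vanishes by~\eqref{eqn:condmean_dotellj}, but that is cosmetic.

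For part~(b) you have correctly located the crux---showing that $\expecs_\theta[(\scoreOpi h_i)(\vc U)\mid U_j=u_j]=0$ for $i\ne j$---but your proposed mechanism is off, and this is a genuine gap. First, each cross term vanishes \emph{individually}, not only after summing over $i$; there is no cancellation between different $i$'s to look for. Second, the route via Gaussian conditional moments of $\vc Z\mid Z_j$ and identities like $\sum_i S_{ij}R_{ij}=1$ will not close easily, because $h_i$ is an arbitrary element of $\rL_2^0[0,1]$, not a polynomial in $Z_i$: you would end up needing to evaluate $\expecs_\theta[h_i(U_i)\mid Z_j]$ and $\expecs_\theta[Z_t H_i(U_i)/\dnorm(Z_i)\mid Z_j]$ for general $h_i$, which has no closed form.

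The paper's device is structural rather than computational: observe that
\[
  \bigl(h_i(u_i)+\dell_i(\vc u;\theta)\,H_i(u_i)\bigr)\,c_\theta(\vc u)
  \;=\;\frac{\partial}{\partial u_i}\bigl(H_i(u_i)\,c_\theta(\vc u)\bigr),
\]
since $\dell_i=\partial_{u_i}\log c_\theta$. Writing the conditional expectation given $U_j=u_j$ as a $(p-1)$-fold integral against $c_\theta$ and integrating first in $u_i$ by Fubini, the inner integral is therefore $[H_i(u_i)c_\theta(\vc u)]_{u_i=0}^{1}$, which vanishes because $H_i(0)=H_i(1)=0$ (this is where $\int h_i=0$ enters) and because $c_\theta(\vc u)$ stays bounded as $u_i\to 0,1$ for fixed $\vc u_{-i}$ (the diagonal entries of $S(\theta)$ are at least~$1$). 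Once this is in hand, your argument for continuous invertibility via the conditional-expectation contraction is exactly the paper's.
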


The proof is given in Appendix A
in the supplement. Here, we just note that the proof of part (a)
follows the one of Proposition~4.7.2 in \citet{bkrw:93}.

\begin{remark}
\label{rem:score_operator}
An application of Banach's theorem [see,
e.g., \citet{bkrw:93}, Proposition A.1.7] implies that $\range\scoreOp$ is closed.
\end{remark}

We proceed with the construction of the tangent space. First, we define
the ``local paths'' through $\Fac$. To this end, fix $F_1,\ldots,F_p\in
\Fac$ with densities $f_1,\ldots,f_p$. Let $h_1,\ldots,h_p \in\rL
^0_{2}[0,1]$ and introduce univariate densities $f_{j,\eta}(   \cdot; h_j)$, for $\eta\in(-1,1)$ and $j = 1,\ldots, p$, by
\[
f_{j,\eta}(x; h_j) = d(\eta; f_j,
h_j) g\bigl(\eta h_j\bigl(F_j(x)\bigr)
\bigr) f_j(x), \qquad x \in\reals,
\]
where $g(z)= 2 / ( 1 + e^{-2z} )$ and where $d(\eta;f_j,h_j)$ is a
positive constant such that $f_{j,\eta}(\cdot;h_j)$ integrates to one.
Let $F_{j,\eta}^{h_j}\in\Fac$ be the induced distribution functions.
The path $\eta\mapsto(F_{1,\eta}^{h_1},\ldots,F_{p,\eta}^{h_p})$, with
values in the space $(\Fac)^p$, passes through $(F_1,\ldots,F_p)$ at
$\eta= 0$.

The following proposition describes the score function at $\eta= 0$ of
the parametric submodel
\[
( \prs_{
\theta+ \eta\alpha,
F_{1,\eta}^{h_1},\ldots, F_{p,\eta}^{h_p}
} | -\eps< \eta< \eps ) \subset\model
\]
for fixed $\alpha\in\reals^k$ and $h_1,\ldots, h_p \in\rL_2^0[0,
1]$ and some $\eps> 0$. The collection of all such score functions is,
by definition, the tangent set of the semiparametric Gaussian copula
model $\model$ at $\prs_{\theta,F_1,\ldots,F_p}$.

\begin{proposition}
\label{prop:tangent_space_full}
Consider a parametrization $\theta\mapsto R(\theta)$ for which
Assumption~\ref{ass:parametrization} holds and let $\prs_{\theta,F_1,\ldots,F_p}\in\model$. Let $\mathbf{h}=(h_1,\ldots,h_p) \in(\rL
_2^0[0, 1])^p$
and let
$\alpha\in\reals^k$.
Then the path $\eta\mapsto
(\theta+ \eta\alpha, F_{1,\eta}^{h_1},\ldots,F_{p,\eta}^{h_p})$, in
$\Theta\times( \Fac)^p$, yields the following score at $\eta=0$:
\[
\dot\ell^{\alpha,\mathbf{h}}(\mathbf{x}) = \alpha^\prime\Score
\bigl(F_1(x_1),\ldots,F_p(x_p);
\theta\bigr)+ [\scoreOp\mathbf{h} ]\bigl(F_1(x_1),\ldots,
F_p(x_p)\bigr)
\]
for $\mathbf{x} \in\reals^p$, that is,
\[
\lim_{\eta\to0} \int_{\reals^p} \biggl(
\frac{\sqrt{p_{\eta}(\mathbf{x})} - \sqrt{p_{0}(\mathbf{x})}}{\eta} -\frac{1}{2} \dot\ell^{\alpha,\mathbf{h}}(\mathbf{x})
\sqrt{p_{0}(\mathbf{x})} \biggr)^2 \,\rd\mathbf{x} = 0,
\]
where 
$p_\eta= p_{
\theta+ \eta\alpha,
F_{1,\eta}^{h_1},\ldots, F_{p,\eta}^{h_p}
}$. The tangent set
%
\begin{equation}
\label{eqn:fulltangentspace} \bigl\{ \dot\ell^{\alpha,\mathbf{h}} | \alpha\in\reals^k,
\mathbf{h} \in\bigl(\rL_2^0[0, 1]\bigr)^p
\bigr\},
\end{equation}
is a closed subspace of $\rL_2^0(\prs_{\theta,F_1,\ldots,F_p})$.
\end{proposition}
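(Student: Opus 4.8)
The plan is to reduce to the case of uniform margins, then to establish differentiability in quadratic mean (DQM) of the path by treating the marginal and the parametric perturbations separately and combining them, and finally to read off closedness from the structure of the score operator. For the reduction, observe that the change of variables $\vc{x}\mapsto\vc{u}=(F_1(x_1),\dots,F_p(x_p))$ does not depend on $\eta$ and induces the isometry $g\mapsto g\circ(F_1,\dots,F_p)$ between $\rL_2^0(\prs_\theta)$ and $\rL_2^0(\prs_{\theta,F_1,\dots,F_p})$. Under it the path $\eta\mapsto p_\eta$ corresponds to $\eta\mapsto\tilde p_\eta(\vc{u})=c_{\theta+\eta\alpha}(G_{1,\eta}^{h_1}(u_1),\dots,G_{p,\eta}^{h_p}(u_p))\prod_{j=1}^p g_{j,\eta}^{h_j}(u_j)$ on $(0,1)^p$, where $g_{j,\eta}^{h_j}(u)=g(\eta h_j(u))/\int_0^1 g(\eta h_j(v))\,\rd v$ is a density on $[0,1]$ with distribution function $G_{j,\eta}^{h_j}$, and the candidate score corresponds to $\alpha'\Score(\vc{u};\theta)+[\scoreOp\vc{h}](\vc{u})$. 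It therefore suffices to show that $\eta\mapsto\tilde p_\eta$ is DQM at $\eta=0$ with this score and that the set $\mathcal{T}:=\{\alpha'\Score+\scoreOp\vc{h}:\alpha\in\reals^k,\ \vc{h}\in\pL\}$ is a closed subspace of $\rL_2^0(\prs_\theta)$. Note that the candidate score is already square-integrable and centered: $\Scorem\in\rL_2^0(\prs_\theta)$ by Lemma~\ref{prop:regular_parametric_model} and Remark~\ref{rem:regular_parametric_model}(i), and $\scoreOp\vc{h}\in\rL_2^0(\prs_\theta)$ by Lemma~\ref{prop:score_operator}(a).

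For each $j$, the univariate path $\eta\mapsto g_{j,\eta}^{h_j}$ on $[0,1]$ is DQM at $\eta=0$ with score $h_j$; this is routine given $g(0)=1$, $0<g<2$, $\sqrt{g}$ Lipschitz, $(\log g)'(0)=1$, and $\int_0^1 h_j=0$ (so the normalizing constant equals $1+O(\eta^2)$), by a Taylor expansion with dominated convergence. Keeping $\theta$ fixed, the path $\eta\mapsto q_\eta(\vc{u})=c_\theta(G_{1,\eta}^{h_1}(u_1),\dots,G_{p,\eta}^{h_p}(u_p))\prod_{j=1}^p g_{j,\eta}^{h_j}(u_j)$ is then DQM at $\eta=0$ with score $[\scoreOp\vc{h}](\vc{u})=\sum_{j=1}^p\{h_j(u_j)+\dell_j(\vc{u};\theta)H_j(u_j)\}$, where $H_j(u)=\int_0^u h_j(\lambda)\,\rd\lambda$: the product $\prod_j g_{j,\eta}^{h_j}(u_j)$ contributes $\sum_j h_j(u_j)$ by the previous step, and since $G_{j,\eta}^{h_j}(u)=u+\eta H_j(u)+o(\eta)$ with $|H_j(u)|\le\norm{h_j}$, the factor $\log c_\theta$ contributes $\sum_j\dell_j(\vc{u};\theta)H_j(u_j)$, the remainder in its first-order expansion being dominated with the help of the Mill's-ratio bound~\eqref{eqn:bound_Ijj}; this bound is exactly what renders $\dell_j(\cdot;\theta)H_j$ square-integrable (cf.\ Lemma~\ref{prop:score_operator}(a)) and furnishes an integrable envelope. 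This is the $p$-dimensional analogue of the argument behind Proposition~4.7.2 of \citet{bkrw:93}. Finally, writing $\sqrt{\tilde p_\eta}-\sqrt{\tilde p_0}=(\sqrt{\tilde p_\eta}-\sqrt{q_\eta})+(\sqrt{q_\eta}-\sqrt{\tilde p_0})$, the second bracket contributes $\frac12[\scoreOp\vc{h}]\sqrt{\tilde p_0}$ in $\rL_2$ by the above, while $\tilde p_\eta$ and $q_\eta$ differ only through $c_{\theta+\eta\alpha}$ versus $c_\theta$ evaluated at the common arguments $G_{j,\eta}^{h_j}(u_j)\to u_j$, so regularity of $\modelpar$ (Lemma~\ref{prop:regular_parametric_model}) together with continuity in $(\theta,\vc{u})$ of the quadratic-form score~\eqref{eqn:score} gives $\eta^{-1}(\sqrt{\tilde p_\eta}-\sqrt{q_\eta})\to\frac12\alpha'\Score(\vc{u};\theta)\sqrt{\tilde p_0}$ in $\rL_2$; adding the two limits yields DQM of $\eta\mapsto\tilde p_\eta$ with the claimed score. (Equivalently, one may verify joint DQM of the two-parameter family obtained by perturbing $\theta$ and the margins independently, and restrict to the diagonal.)

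For closedness, note that $(\alpha,\vc{h})\mapsto\alpha'\Score+\scoreOp\vc{h}$ is linear from $\reals^k\times\pL$ into $\rL_2(\prs_\theta)$, so $\mathcal{T}$---which by construction is the image of this map---is a linear subspace, contained in $\rL_2^0(\prs_\theta)$ by the centering observations above. Moreover $\mathcal{T}=\linspan\{\Scorem\mid m=1,\dots,k\}+\range\scoreOp$, the sum of a finite-dimensional subspace and, by Remark~\ref{rem:score_operator}, a closed subspace; such a sum is closed (for instance, the image of the finite-dimensional part in the Banach-space quotient $\rL_2^0(\prs_\theta)/\range\scoreOp$ is closed, and its preimage under the continuous quotient map is therefore closed). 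Transporting back along the isometry of the reduction step shows that the tangent set~\eqref{eqn:fulltangentspace} is a closed subspace of $\rL_2^0(\prs_{\theta,F_1,\dots,F_p})$.

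The main obstacle is the DQM verification for the nonparametric (marginal) perturbation: the copula log-density is evaluated at perturbed arguments that may drift to the boundary of $(0,1)^p$, where $\dell_j(\cdot;\theta)$ is unbounded, so bounding the remainder in its first-order expansion requires the Mill's-ratio estimate~\eqref{eqn:bound_Ijj} together with a careful dominated-convergence argument---in effect reproducing, and lifting from $p=2$ to general $p$, the delicate computation of Section~4.7 of \citet{bkrw:93}. The remaining ingredients (the marginal path, the parametric path via regularity of $\modelpar$, and the functional-analytic closedness step) are routine.
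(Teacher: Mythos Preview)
Your proposal is correct and follows the same approach as the paper. The paper omits the DQM verification entirely, referring to Proposition~4.7.4 in \cite{bkrw:93} for the bivariate analogue, and your sketch---reduction to uniform margins, univariate DQM of the marginal paths, the Mill's-ratio-controlled combination via the score operator, and the parametric piece via regularity of $\modelpar$---is exactly what that argument entails; your closedness proof (finite-dimensional span plus the closed range $\range\scoreOp$ from Remark~\ref{rem:score_operator}, transported back along the isometry) is identical to the paper's.
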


The tangent set \eqref{eqn:fulltangentspace}, being a closed subspace,
is called the \emph{tangent space} of $\model$ at $\prs_{\theta,
F_1,\ldots, F_p}$. Apart from the statement on closedness, which we
discuss below, the proof of Proposition~\ref{prop:tangent_space_full}
is analogous to the proof of Proposition~4.7.4 in \citet{bkrw:93} and
is omitted.

The nonparametric part $\mathcal{T}_{\prs_{\theta,F_1,\ldots,F_p}}$
of the tangent space at $\prs
_{\theta,F_1,\ldots,F_p}$ corresponds, by definition, to the subset of
\eqref{eqn:fulltangentspace} resulting from the restriction $\alpha=0$,
that is,
%
\begin{eqnarray}
\label{eq:tangentspace}
\mathcal{T}_{\prs_{\theta,F_1,\ldots,F_p}} &=& \bigl\{ \dot\ell^{0, \mathbf{h}} |
\mathbf{h} \in\bigl(\rL _2^0[0, 1]\bigr)^p
\bigr\}
\nonumber
\\[-8pt]
\\[-8pt]
\nonumber
 &= &\bigl\{ \mathbf{x} \mapsto[\scoreOp\mathbf{h} ]
\bigl(F_1(x_1),\ldots,F_p(x_p)
\bigr) | \mathbf{h}\in\bigl(\rL_2^0[0, 1]
\bigr)^p \bigr\}.
\end{eqnarray}
Since $\mathcal{T}_{\prs_{\theta,F_1,\ldots,F_p}}$ is isometric to
$\mathcal{T}_{\prs_{\theta}}$
and since $\mathcal{T}_{\prs_{\theta}}=\range\scoreOp$, which is
closed, it
follows that
$\mathcal{T}_{\prs_{\theta,F_1,\ldots,F_p}}$ is a closed subspace of
$\rL_2^0(\prs_{\theta,F_1,\ldots,F_p})$. The tangent space~\eqref{eqn:fulltangentspace}, being the sum
of $\mathcal{T}_{\prs_{\theta,F_1,\ldots,F_p}}$ and a
finite-dimensional space, is closed as well.

\subsection{Efficient score}
\label{sec:effscore}

The semiparametric lower bound for regular estimators of the copula
parameter $\theta$ 
is determined by the efficient score, $\dell_{\theta}^*( \mathbf{X};
\prs
_{\theta,F_1,\ldots,F_p})$. (As before, we identify square-integrable functions
with random variables; formally, view $\mathbf{X}$ as the identity map on
$\reals^p$.) This efficient score is, by definition, given by
\begin{eqnarray*}
\dell_{\theta}^*( \mathbf{X}; \prs_{\theta,F_1,\ldots,F_p}) &=& \dot
\ell_\theta\bigl(F_1(X_1),\ldots,F_p(X_p);
\theta\bigr)
\\
&&{}-\Pi\bigl( \Score\bigl(F_1(X_1),\ldots,F_p(X_p);
\theta\bigr) \mid\mathcal {T}_{\prs_{\theta,F_1,\ldots,F_p}}\bigr),
\end{eqnarray*}
where $\Pi(  \cdot  | \mathcal{T}_{\prs_{\theta,F_1,\ldots,F_p}}) $ is the (coordinate-wise)
projection operator from\break $\rL_2(\prs_{\theta,F_1,\ldots,F_p})$ onto the
closed subspace $\mathcal{T}_{\prs_{\theta,F_1,\ldots,F_p}}$. Note
that $\dot\ell_\theta$, and hence
$\dell_{\theta}^*$ are vectors of length $k$, the length of the copula
parameter $\theta$.

For Gaussian copula models,
parametrized by $\theta\mapsto R(\theta)$, the
projection can be calculated explicitly, which will lead eventually to
our one-step estimator in Section~\ref{sec:estimator}. This situation
is in contrast to the one for most other copula models, even for
bivariate copulas indexed by a real-valued parameter, where the
calculation of the efficient score requires the solution of a pair of
coupled Sturm--Liouville differential equations [\citet{bkrw:93}, Section~4.7].

From the isometry between $\mathcal{T}_{\prs_{\theta,F_1,\ldots,F_p}}$ and $\mathcal{T}_{\prs_{\theta}}$, via the
mapping $\mathbf{X} \mapsto(F_1(X_1), \break \ldots, F_p(X_p))$, we obtain the
important identity
%
\begin{equation}
\label{eqn:effscore_reduction}\qquad \dell_{\theta}^*(\mathbf{X}; \prs_{\theta,F_1,\ldots,F_p}) =
\dell_{\theta}^*\bigl(F_1(X_1),\ldots,F_p(X_p);
\theta\bigr), \qquad\prs_{\theta,F_1,\ldots,F_p}\mbox{-a.s.},
\end{equation}
where $\dell_{\theta}^*(  \cdot; \theta)$ is shorthand notation for
$\dell_{\theta}^*(  \cdot; \prs_{\theta} )$ and $\prs_\theta$
is shorthand
for $\prs_{\theta,\Un[0,1],\ldots,\Un[0,1]}$. As a consequence, the
$k\times k$ efficient information matrix for $\theta$ at $\prs
_{\theta,F_1,\ldots,F_p}$, given by
\[
I^*_{}(\theta) = \expecs_{\theta,F_1,\ldots,F_p} \bigl[ \dell_{\theta}^*
\dell_{\theta}^{*\prime} ( \mathbf{X}| \prs_{\theta,F_1,\ldots,F_p}) \bigr] =
\expecs_{\theta} \bigl[ \dell_{\theta}^*\dell_{\theta}^{*\prime}(
\mathbf{U}; \theta) \bigr],
\]
does not depend on the marginals $F_1,\ldots,F_p$.

Because $\mathcal{T}_{\prs_{\theta}}=\range\scoreOp$ and $\scoreOp
$ is one-to-one
(see Proposition~\ref{prop:score_operator}), there exist unique
elements $\mathbf{h}^{\theta}_{m} = (h^{\theta}_{1,m}, \ldots, h^{\theta
}_{p,m}) \in(\rL_2^0[0, 1])^p$, $m=1,\ldots,k$, such that
%
\begin{equation}
\label{eq:hthetam} \Pi\bigl( \Scorem(\mathbf{U};\theta) \mid\mathcal{T}_{\prs_{\theta}}
\bigr) = \bigl[\scoreOp\mathbf{h}^{\theta}_{m} \bigr](
\mathbf{U}).
\end{equation}
These ``generators of the efficient score'' are completely determined
by the orthogonality conditions
%
\begin{equation}\qquad
\label{eqn:orthocond} 0=\expecs_\theta \bigl[ \bigl( \Scorem(\mathbf{U};
\theta)- \bigl[\scoreOp\mathbf{h}^\theta_m\bigr](\mathbf{U})
\bigr) [\scoreOp\mathbf{h}](\mathbf{U}) \bigr], \qquad\mathbf{h}\in\bigl(
\rL_2^0[0, 1]\bigr)^p.
\end{equation}
Before we solve 
for $\mathbf{h}_m^\theta$, $m=1,\ldots,k$, we provide a necessary and
sufficient condition for orthogonality of quadratic forms in the
Gaussianized variables to the space $\mathcal{T}_{\prs_{\theta}}$.
Its proof is
given in Appendix A 
in the supplement.

\begin{lemma}
\label{prop:orthogonalitytangentspace}
Consider a parametrization $\theta\mapsto R(\theta)$ for which
Assumption~\ref{ass:parametrization} holds and let $\theta\in\Theta$
and $A \in\operatorname{Sym}(p)$. The following two conditions are equivalent:
\begin{longlist}[(a)]
\item[(a)] the function $\mathbf{Z}' A \mathbf{Z}$ in $\rL_2(\prs_\theta)$ is
orthogonal to $\mathcal{T}_{\prs_{\theta}}$;
\item[(b)]$(R(\theta) A)_{jj} = 0$ for $j = 1,\ldots, p$.
\end{longlist}
\end{lemma}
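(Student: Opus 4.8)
The plan is to characterize orthogonality of the quadratic form $\vc{Z}'A\vc{Z}$ to the nonparametric tangent space $\tangentspaceU = \range\scoreOp$ by testing it against the generating functions $\scoreOpj h$, $h \in \rL_2^0[0,1]$, and reducing each such inner product to a one-dimensional conditioning argument. Since $\scoreOp\vc{h} = \sum_{j=1}^p \scoreOpj h_j$ and the $h_j$ range freely over $\rL_2^0[0,1]$, condition (a) is equivalent to
\[
  \expecs_\theta\bigl[ (\vc{Z}'A\vc{Z} - \tr(AR(\theta)))\,(\scoreOpj h)(\vc{U}) \bigr] = 0
  \quad\text{for all } h \in \rL_2^0[0,1],\ j = 1,\dots,p,
\]
where the centering constant is harmless because $\scoreOpj h \in \rL_2^0(\prs_\theta)$ by Lemma~\ref{prop:score_operator}(a). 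So the whole problem decouples across $j$.

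Next I would fix $j$ and compute the inner product explicitly. Using \eqref{eqn:scoreoperator}, $(\scoreOpj h)(\vc{U}) = h(U_j) + \dell_j(\vc{U};\theta)H(U_j)$, so by conditioning on $U_j = u_j$ the inner product becomes
\[
  \expecs_\theta\Bigl[ \bigl( q_j(U_j) - c \bigr) h(U_j) + \bigl( \expecs_\theta[\vc{Z}'A\vc{Z}\,\dell_j(\vc{U};\theta)\mid U_j] \bigr) H(U_j) \Bigr],
\]
where $q_j(u_j) = \expecs_\theta[\vc{Z}'A\vc{Z}\mid U_j = u_j]$ and $c = \tr(AR(\theta))$. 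The first key computation is $q_j(u_j)$: writing $\vc{Z}'A\vc{Z} = \sum_{s,t} A_{st}Z_sZ_t$ and using that, under $\prs_\theta$, the conditional law of $Z_s$ given $Z_j = z_j$ is Gaussian with mean $R_{sj}(\theta)z_j$, one gets $q_j(u_j) = a_j z_j^2 + b_j$ for constants $a_j, b_j$ depending on $A$ and $R(\theta)$; in particular the coefficient of $z_j^2$ works out to be $(R(\theta)AR(\theta))_{jj}$. Wait — I should double-check this: the term $A_{jj}Z_j^2$ contributes $A_{jj}z_j^2$, while for $s \ne j$, $\expecs[Z_sZ_j \mid Z_j] = R_{sj}z_j^2$ contributes $2A_{sj}R_{sj}z_j^2$, and for $s,t \ne j$, $\expecs[Z_sZ_t\mid Z_j] = R_{sj}R_{tj}z_j^2 + (\text{conditional cov})$ contributes $A_{st}R_{sj}R_{tj}z_j^2$ plus a constant. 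Summing, the $z_j^2$ coefficient is $\sum_{s,t}A_{st}R_{sj}R_{tj} = (R(\theta)AR(\theta))_{jj}$, and the constant $b_j$ absorbs the conditional-covariance terms. The second key computation is $\expecs_\theta[\vc{Z}'A\vc{Z}\,\dell_j(\vc{U};\theta)\mid U_j]$: here it is cleanest to use \eqref{eqn:condmean_dotellj}, which says $\expecs_\theta[\dell_j\mid U_j] = 0$, together with the explicit form \eqref{eqn:ellj} of $\dell_j$ and the fact that $\expecs_\theta[Z_i\mid Z_j = z_j] = R_{ij}z_j$; after the algebra this conditional expectation reduces to a scalar multiple of $q_j'(u_j)$ or, more directly, to something proportional to $(R(\theta)A)_{jj} z_j / \dnorm(z_j)$ times a nonzero constant (built from $I_{jj}$ as in \eqref{eqn:Ijj}).

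Plugging these back, the orthogonality condition for each $j$ becomes a statement about a functional of $h$ and $H = \int_0^\cdot h$ that must vanish for all $h \in \rL_2^0[0,1]$; integrating by parts (or equivalently writing everything in terms of $H$, noting $H(0) = H(1) = 0$) collapses the two terms, and the condition reduces to the single scalar equation $(R(\theta)A)_{jj} = 0$. The structure to watch is that the $z_j^2$-part of $q_j$ and the $\dell_j$-part conspire to produce exactly $(R(\theta)A)_{jj}$ as a common factor, while any constant offset $b_j - c$ is killed because $h$ has mean zero — this is why condition (b) is about $R(\theta)A$ and not $R(\theta)AR(\theta)$. For the converse direction, if $(R(\theta)A)_{jj} = 0$ for all $j$ then each inner product vanishes by the same identity, so $\vc{Z}'A\vc{Z} - c \perp \range\scoreOpj$ for every $j$, hence $\perp \range\scoreOp = \tangentspaceU$.

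**The main obstacle** I anticipate is the bookkeeping in the two conditional-expectation computations — in particular correctly tracking the constant terms coming from conditional covariances in $q_j$, and verifying that the $\dell_j$-weighted term really does pair up with the $z_j^2$-term to yield the clean factor $(R(\theta)A)_{jj}$ rather than some other combination. A clean way to organize this is to use the identity $\dell_j(\vc{u};\theta) = -\partial_{u_j}\bigl(\tfrac12\vc{z}'S(\theta)\vc{z}\bigr)/\dnorm(z_j)\cdot(\cdots)$ sparingly and instead lean on the already-established facts \eqref{eqn:condmean_dotellj}, \eqref{eqn:Ijj}, and the inner-product identity \eqref{eq:innerProductDef}, so that the Gaussian conditioning is invoked only through its first two conditional moments. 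The square-integrability needed to justify the conditioning and the integration by parts is supplied by Lemma~\ref{prop:score_operator}(a) and the Mill's-ratio bound \eqref{eqn:bound_Ijj}.
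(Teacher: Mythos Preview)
Your overall architecture is sound and closely parallels the paper's proof of (b)$\Rightarrow$(a): decouple over $j$, condition on $U_j$, and use an integration by parts to collapse the two pieces of $\scoreOpj h$. However, one intermediate claim is wrong. The conditional expectation
\[
  \expecs_\theta\bigl[\vc{Z}'A\vc{Z}\,\dell_j(\vc{U};\theta)\mid U_j=u_j\bigr]
\]
is \emph{not} proportional to $(R(\theta)A)_{jj}\,z_j/\dnorm(z_j)$. Writing $a(\vc{u})=\vc{z}'A\vc{z}$ and using $\partial_{u_j}(a\,c_\theta)=\dot a_j\,c_\theta+a\,\dell_j\,c_\theta$ together with your own formula $q_j(u_j)=(R A R)_{jj}z_j^2+\text{const}$, one finds
\[
  \expecs_\theta[a\,\dell_j\mid U_j=u_j]
  =\partial_{u_j}q_j(u_j)-\expecs_\theta[\dot a_j\mid U_j=u_j]
  =\frac{2z_j}{\dnorm(z_j)}\Bigl[(RAR)_{jj}-(RA)_{jj}\Bigr].
\]
So it is the \emph{difference} $(RAR)_{jj}-(RA)_{jj}$ that appears, not $(RA)_{jj}$ alone. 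After the integration by parts $\int_0^1\frac{2z_j}{\dnorm(z_j)}H(u_j)\,\rd u_j=-\int_0^1(z_j^2-1)h(u_j)\,\rd u_j$, the $(RAR)_{jj}$ contributions from the two terms cancel and the full inner product equals
\[
  \expecs_\theta\bigl[(\vc{Z}'A\vc{Z}-c)\,[\scoreOpj h](\vc{U})\bigr]
  =(R(\theta)A)_{jj}\int_0^1(z_j^2-1)\,h(u_j)\,\rd u_j,
\]
from which the equivalence is immediate. Your sketch reaches the right conclusion, but the route through ``scalar multiple of $q_j'$'' or ``proportional to $(RA)_{jj}$'' is incorrect; it is precisely the cancellation you allude to that produces $(RA)_{jj}$.

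By way of comparison, the paper organizes the argument a bit differently. For (a)$\Rightarrow$(b) it does not test against general $h$ at all: it takes the single function $h(u)=1-z^2$, for which $H(u)=z\,\dnorm(z)$ and $[\scoreOpj h](\vc{U})=1-\vc{Z}'S(\theta)\diag(\vc{e}_j)\vc{Z}$ is itself a quadratic form, so the covariance is read off from the trace formula~\eqref{eq:innerProductDef} and equals $2(R(\theta)A)_{jj}$. For (b)$\Rightarrow$(a) the paper avoids computing $\expecs_\theta[a\,\dell_j\mid U_j]$ explicitly: it integrates $\int a\,h\,c_\theta$ by parts in $u_j$ and uses only that $\expecs_\theta[\dot a_j\mid U_j]=\frac{2z_j}{\dnorm(z_j)}(R(\theta)A)_{jj}=0$, so the two terms cancel without ever writing down the second conditional expectation. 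Your approach, once corrected, yields the sharper closed form above and handles both directions at once; the paper's version trades that for less Gaussian-moment bookkeeping.
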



The following proposition presents the solution $\mathbf{h}^\theta_m$
to \eqref{eqn:orthocond} and the resulting efficient score and
efficient information matrix. To formulate these results, we introduce
the notation, for $\mathbf{b} \in\reals^p$ and $\theta\in\Theta$,
%
\begin{equation}
\label{eqn:Dmatrix} D_\theta(\mathbf{b}) = S(\theta) \diag(\mathbf{b}) +
\diag(\mathbf{b}) S(\theta),
\end{equation}
where $\diag(\mathbf{b})$ is the diagonal matrix with diagonal $\mathbf{b}$.
Let $\ones_p$ denote the $p$-dimensional vector of ones and let
$A\hdot
B$ denote the Hadamard product of conformable matrices $A$ and $B$.
Recall the inner product 
introduced in \eqref{eq:innerProductDef} and recall the convention $z =
\Phi^{-1}(u)$ for $u \in(0, 1)$.

\begin{proposition}
\label{prop:effscore}
Consider a parametrization $\theta\mapsto R(\theta)$ for which
Assumption~\ref{ass:parametrization} holds and let $\theta\in\Theta$.
Then, for $m=1,\ldots,k$, the vector $\mathbf{h}_m^\theta\in(\rL_2^0[0,
1])^p$ in
\eqref
{eq:hthetam} is given by
%
\begin{equation}
\label{eqn:effscore_generator} h_{j,m}^\theta(u_j) =
g_{j,m}(\theta) \bigl( 1-z_j^2\bigr),\qquad
u_j\in(0,1), j=1,\ldots,p,
\end{equation}
where $\mathbf{g}_m(\theta)=(g_{1,m}(\theta),\ldots,g_{p,m}(\theta
))^\prime$
is given by
%
\begin{equation}
\label{eqn:effscore_generator_weights} \mathbf{g}_{m}(\theta) = - \bigl( I_p+R(
\theta)\hdot S(\theta) \bigr)^{-1} \bigl( \dot R_m(\theta)
\hdot S(\theta) \bigr) \ones_p.
\end{equation}
Moreover, the efficient score, $\dell_{\theta}^*$, is given by, for
$m=1,\ldots,k$,
%
\begin{equation}
\label{eqn:effscorem} \dell_{\theta,m}^*(\mathbf{u};\theta) = \tfrac{1}{2}
\mathbf{z}^\prime \bigl( D_\theta\bigl(\mathbf{g}_m(
\theta)\bigr) -\dot S_m(\theta) \bigr) \mathbf{z}.
\end{equation}
Finally, the efficient information matrix, $I^*(\theta)$, is given by
%
\begin{equation}
\label{eqn:effInfoMat} I^*_{mm^\prime}(\theta) = \bigl\langle{D_\theta\bigl(
\mathbf{g}_{m}(\theta)\bigr)-\dot S_{m}(\theta
)},{D_\theta\bigl(\mathbf{g}_{m^\prime}(\theta)\bigr)-\dot
S_{m^\prime}(\theta )}\bigr\rangle_\theta,
\end{equation}
for $m,m^\prime=1,\ldots,k$, and is nonsingular.
\end{proposition}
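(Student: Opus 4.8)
The plan is to exploit the characterization in Lemma~\ref{prop:orthogonalitytangentspace} together with the identity for $h_{j,m}^\theta$ in terms of $\vc{g}_m(\theta)$. First I would observe that, because $z \mapsto 1 - z^2$ is (up to scaling) the second Hermite polynomial, the operator $\scoreOpj$ acting on $h_{j,m}^\theta(u_j) = g_{j,m}(\theta)(1-z_j^2)$ produces a quadratic form in $\vc{z}$: indeed $h_{j,m}^\theta(U_j) + \dell_j(\vc{U};\theta) H_{j,m}^\theta(U_j)$ can be rewritten, using \eqref{eqn:ellj} and the primitive $H$ of $1-z^2$ against Lebesgue measure, as $\tfrac12 \vc{Z}' A_{j,m} \vc{Z}$ plus a constant for a suitable symmetric matrix $A_{j,m}$ depending linearly on $g_{j,m}(\theta)$. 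Summing over $j$ gives $[\scoreOp \vc{h}_m^\theta](\vc{U}) = \tfrac12 \vc{Z}' D_\theta(\vc{g}_m(\theta)) \vc{Z} + \text{const}$, the matrix $D_\theta(\vc{b})$ in \eqref{eqn:Dmatrix} arising precisely as the symmetrization coming from the cross term $S_{ij}z_i z_j$ in $\dell_j$. Combined with \eqref{eqn:score}, which writes $\Scorem(\vc{u};\theta)$ as $\tfrac12 \vc{z}'(-\dot S_m(\theta))\vc{z}$ plus a constant, this yields \eqref{eqn:effscorem} once the projection is known to be $[\scoreOp \vc{h}_m^\theta](\vc{U})$.

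Next I would verify the orthogonality conditions \eqref{eqn:orthocond}, equivalently that $\Scorem(\vc{U};\theta) - [\scoreOp \vc{h}_m^\theta](\vc{U})$ is orthogonal to $\tangentspaceU = \range \scoreOp$. By the previous paragraph this residual equals $\tfrac12 \vc{Z}'(D_\theta(\vc{g}_m(\theta)) - \dot S_m(\theta))\vc{Z}$ plus a constant; a constant is automatically orthogonal to $\rL_2^0(\prs_\theta)$, so by Lemma~\ref{prop:orthogonalitytangentspace} the orthogonality is equivalent to $\bigl(R(\theta)(D_\theta(\vc{g}_m(\theta)) - \dot S_m(\theta))\bigr)_{jj} = 0$ for all $j$. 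Here the computation $R(\theta)S(\theta) = I_p$ gives $\bigl(R(\theta)D_\theta(\vc{b})\bigr)_{jj} = b_j + \bigl(R(\theta)\hdot S(\theta)\bigr)_{j\cdot}\vc{b}$, and differentiating $R(\theta)S(\theta)=I_p$ gives $\bigl(R(\theta)\dot S_m(\theta)\bigr)_{jj} = -\bigl(\dot R_m(\theta)S(\theta)\bigr)_{jj} = -\bigl(\dot R_m(\theta)\hdot S(\theta)\bigr)_{j\cdot}\ones_p$. Thus the diagonal condition reads $\bigl(I_p + R(\theta)\hdot S(\theta)\bigr)\vc{g}_m(\theta) = -\bigl(\dot R_m(\theta)\hdot S(\theta)\bigr)\ones_p$, which is exactly \eqref{eqn:effscore_generator_weights}; this simultaneously forces us to check that $I_p + R(\theta)\hdot S(\theta)$ is invertible, which follows because $R(\theta)\hdot S(\theta)$ is positive semidefinite (Hadamard product of two positive definite matrices, by the Schur product theorem) so $I_p + R(\theta)\hdot S(\theta) \succeq I_p$. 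This verifies \eqref{eqn:effscore_generator}, \eqref{eqn:effscore_generator_weights} and, since the projection is the unique element of $\tangentspaceU$ making the residual orthogonal, \eqref{eqn:effscorem}.

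Formula \eqref{eqn:effInfoMat} then follows by definition: $\effinfoMat{mm'} = \expecs_\theta[\effScorem(\vc{U};\theta)\effScore_{\theta,m'}(\vc{U};\theta)] = \cov_\theta(\tfrac12\vc{Z}'A_m\vc{Z}, \tfrac12\vc{Z}'A_{m'}\vc{Z}) = \inpr{A_m}{A_{m'}}$ with $A_m = D_\theta(\vc{g}_m(\theta)) - \dot S_m(\theta)$, using that the efficient score has mean zero (being a difference of a score and its projection) to discard the constant terms, and recalling the covariance representation \eqref{eq:innerProductDef} of the inner product. Finally, non-singularity of $\effInfoMat$: since $\effinfoMat{}$ is the Gram matrix of $A_1, \dots, A_k$ under $\inpr{\cdot}{\cdot}$, it suffices to show $A_1, \dots, A_k$ are linearly independent in $\symp$. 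Suppose $\sum_m c_m A_m = 0$; then $\sum_m c_m \dot S_m(\theta) = D_\theta(\sum_m c_m \vc{g}_m(\theta))$, and since $\dot S_m(\theta) = -S(\theta)\dot R_m(\theta)S(\theta)$, the left side is $-S(\theta)\bigl(\sum_m c_m \dot R_m(\theta)\bigr)S(\theta)$. Here I expect the main obstacle to be this last independence argument: one must show that no nontrivial combination $\dot S = \sum_m c_m \dot S_m(\theta)$ — which by Assumption~\ref{ass:parametrization}(iv) and the bijection $A \mapsto -S(\theta)AS(\theta)$ is itself nonzero whenever $(c_m) \neq 0$ — can equal a matrix of the form $D_\theta(\vc{b})$. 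The cleanest route is the one already implicit in Part~A of the proof (referenced in the text): the residual $A_m = D_\theta(\vc{g}_m) - \dot S_m$ is, by construction, the part of $-\dot S_m$ orthogonal to $\tangentspaceU$ translated into the quadratic-form picture, so $\sum c_m A_m = 0$ would say the nonzero matrix $\sum c_m(-\dot S_m)$ lies in the span of $\{D_\theta(\vc{b}) : \vc{b}\in\reals^p\}$, i.e.\ its quadratic form lies in $\tangentspaceU$; but then $\effScore$ for the submodel with parameter $\sum c_m \theta_m$ would vanish, contradicting that the corresponding component of $-\dot S$ is nonzero and that $\inpr{\cdot}{\cdot}$ is a genuine inner product. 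I would write this out carefully, as it is the one place where Assumption~\ref{ass:parametrization}(iv) is genuinely used.
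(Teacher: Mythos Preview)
Your overall structure matches the paper's Parts C and D closely: compute $[\scoreOp \vc{h}_m^\theta]$ as a quadratic form, subtract from the parametric score, and verify orthogonality via Lemma~\ref{prop:orthogonalitytangentspace} by reducing the diagonal condition to the linear system defining $\vc{g}_m(\theta)$. That part is fine.

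Your argument for invertibility of $I_p + R(\theta)\hdot S(\theta)$ via the Schur product theorem is correct and in fact more direct than the paper's route. The paper instead identifies $I_p + R(\theta)\hdot S(\theta)$ as the Gram matrix of $D_\theta(\vc{e}_1),\dots,D_\theta(\vc{e}_p)$ under $\inpr{\cdot}{\cdot}$ and appeals to their linear independence (Part~A); your one-line positivity argument bypasses this.

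However, your argument for non-singularity of $\effInfoMat$ has a genuine gap. You correctly reduce to showing that no nontrivial combination $\sum_m c_m(-\dot S_m(\theta))$ lies in $\{D_\theta(\vc{b}):\vc{b}\in\reals^p\}$, but the contradiction you offer is circular: you say this would force $\sum_m c_m \effScorem = 0$, which is exactly the hypothesis $\sum_m c_m A_m = 0$ you started from, not a contradiction. That a nonzero score has zero efficient score is not a priori impossible. The paper's Part~A gives the missing step cleanly: apply the linear isomorphism $A \mapsto R(\theta)AR(\theta)$, under which $-\dot S_m(\theta) \mapsto \dot R_m(\theta)$ (zero diagonal) and $D_\theta(\vc{b}) \mapsto R(\theta)\diag(\vc{b}) + \diag(\vc{b})R(\theta)$ (diagonal $2\vc{b}$). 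Equating diagonals forces $\vc{b}=0$, whence $\sum_m c_m \dot R_m(\theta)=0$, and Assumption~\ref{ass:parametrization}(iv) gives $c=0$. You should replace your final paragraph with this diagonal-comparison argument.
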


\begin{pf*}{Outline of the proof}
The proof is decomposed into four parts. Here, we just give an outline.
For a detailed proof, please see Appendix A 
in the supplement.

In part A, we show that the $k+p$ matrices
\[
-\dot S_1(\theta), \ldots, -\dot S_k(\theta),
D_\theta(\mathbf{e}_1), \ldots, D_\theta(
\mathbf{e}_p),
\]
with $\mathbf{e}_i$ the $i$th canonical unit vector in $\reals^p$, are
linearly independent. Part B exploits this result to demonstrate
nonsingularity of $I_p + R(\theta) \hdot S(\theta)$, thereby showing
that the vector $\mathbf{g}_m(\theta)$ in \eqref
{eqn:effscore_generator_weights} is well defined.
Part C shows that equation \eqref{eqn:effscore_generator} together with
the definition
%
\begin{equation}
\label{eqn:partC} \dell_{\theta,m}^*(\mathbf{u}; \theta) = \dot
\ell_{\theta,m}(\mathbf{u}; \theta) - \bigl[ \scoreOp\mathbf{h}_m^\theta
\bigr]( \mathbf{u} )
\end{equation}
lead to \eqref{eqn:effscorem}--\eqref{eqn:effInfoMat}
and also demonstrates nonsingularity of $I^*_{}(\theta)$. Finally,
part~D proves that the orthogonality conditions \eqref{eqn:orthocond}
hold by applying Lemma~\ref{prop:orthogonalitytangentspace}, and thus
shows that \eqref{eqn:effscorem} is the efficient score.
\end{pf*}

\begin{remark}
The $k\times k$ positive semidefinite matrix $I_{}(\theta
)-I^*_{}(\theta)$
represents the loss of information due to not knowing the marginals.
In Section~\ref{subsec:adaptivity}, we provide conditions for
adaptivity, that is, $I_{}(\theta) = I^*_{}(\theta)$.
\end{remark}

\begin{remark}
\label{rem:score}
\citet{klaassen:wellner:97} derived the efficient score for the
bivariate ($p=2$) unrestricted Gaussian copula model by solving a
system of Sturm--Liouville equations. The functions \eqref
{eqn:effscore_generator} solve a $p$-variate analogue for the component
$m\in\{1,\ldots,k\}$ of the efficient score; see equation (A.11)
in the supplement.
\end{remark}

\section{An efficient rank-based estimator}
\label{sec:estimator}

In this section, we use the efficient score $\dell_{\theta}^*$ and the
efficient information matrix $I^*(\theta)$, as obtained in
Proposition~\ref{prop:effscore}, to construct a rank-based,
semiparametrically efficient estimator of $\theta$. Recall [\citet
{vdvaart:00}, Sections 25.3--25.4] that $\hat{\theta}_n$ is an
efficient estimator of $\theta$ in model $\model$ at $\prs_{\theta,F_1,\ldots,F_p}\in\model$ if and only if, under $\pr_{\theta,F_1,\ldots,F_p}$,
%
\begin{equation}
\label{eqn:efficiency} \sqrt{n}( \hat{\theta}_n - \theta) =
\frac{1}{\sqrt{n}} \sum_{i=1}^n
I^{*-1}(\theta){} \dell_{\theta}^*\bigl( F_1(X_{i1}),
\ldots, F_p(X_{ip}); \theta\bigr) + o_P(1).
\end{equation}
Moreover, $\hat{\theta}_n$ is called efficient (in the model $\model$)
if it is efficient at all $\prs_{\theta, F_1,\ldots, F_p} \in
\model$.
The limiting distribution of an efficient estimator is thus given by
$N_k(0, \break I^{*-1}(\theta){})$. By the H\'{a}jek--Le Cam convolution
theorem, the limiting distribution of any regular estimator is given by
the convolution of $N_k(0, I^{*-1}(\theta){})$ and another, estimator
specific, distribution. As a consequence, $I^{*-1}(\theta){}$ provides a
lower bound to the asymptotic variance of regular estimators. See
Section~\ref{sec:regularity} for an insightful characterization of
regularity for estimators in structured Gaussian copula models.

The vector-valued function
%
\begin{equation}
\label{eq:effInfFun} 
\mathbf{x} \mapsto I^{*-1}(\theta){}
\dell_{\theta}^*\bigl( F_1(x_{1}),\ldots,
F_p(x_{p}); \theta\bigr)
\end{equation}
is called the \emph{efficient influence function}. According to \eqref
{eqn:efficiency}, an estimator sequence $\hat{\theta}_n$ is efficient
for $\theta$ if and only if $\sqrt{n} (\hat{\theta}_n - \theta)$ is
asymptotically linear in the efficient influence function. By
Proposition~\ref{prop:effscore}, each component of the efficient
influence function is a centered quadratic form in the Gaussianized
vector $\mathbf{z} \in\reals^p$, where $z_j = \Phi^{-1}(u_j)$ and $u_j =
F_j(x_j)$. This fact will be extensively used in Section~\ref{SEC:QUADFUN}.

We construct an efficient one-step estimator (OSE) by updating an
initial $\sqrt{n}$-consistent estimator of $\theta$. Since we want to
construct a rank-based estimator of~$\theta$, we require that the
initial estimator is rank-based, too. We summarize these requirements
in the following assumption. Recall that we consider a measurable space
equipped with probability measures $\pr_{\theta,F_1,\ldots,F_p}$ and
carrying random vectors $\mathbf{X}_i = (X_{i1},\ldots, X_{ip})'$, for $i
\ge1$, which, under $\pr_{\theta,F_1,\ldots,F_p}$, are i.i.d. with
common law $\prs_{\theta, F_1,\ldots, F_p} \in\model$. Also recall
that $\mathbf{R}_i^{(n)}= (R_{i1}^{
(n)},\ldots, R_{ip}^{(n)})^\prime$ denotes a
vector of ranks, with $R_{ij}^{(n)}$ the rank of
$X_{ij}$ within the $j$th
marginal sample $X_{1j},\ldots,X_{nj}$.

\begin{assumption}
\label{ass:rootnconsistent}
There exists an estimator $\PLE^* = t_n(\mathbf{R}_1^{(n)}, \ldots,\mathbf
{R}_n^{(n)})$ 
such that, for all $\prs_{\theta,F_1,\ldots,F_p}\in\model$, we have
$\sqrt{n}(\PLE^*-\theta)=O_P(1)$ under $\pr_{\theta,F_1,\ldots,F_p}$.
\end{assumption}

An obvious candidate 
is the pseudo-likelihood estimator. 
If the copula parameter $\theta$ can be expressed as a smooth
function of the correlation matrix $R(\theta)$, an alternative is to
construct a minimum distance type estimator of $\theta$ using the
normal scores rank correlations.

In what follows, $\PLE$ denotes a discretized version of $\PLE^*$,
obtained by rounding 
$\PLE^*$ to the grid $n^{-1/2}\mathbb{Z}^k$. Discretization of the
initial estimator is needed in the efficiency proof but has little to
no practical implications [see pages~125 or 188 in \citet{LeCamYang:90}
for a discussion].

From Proposition~\ref{prop:effscore}, recall the efficient score
function $\dell_{\theta}^*$ and the efficient information matrix
$I^*_{}(\theta)$. Further, rescaled versions of the marginal empirical
distribution
functions are provided by
\[
\hat F_{n,j}(x) = \frac{1}{n+1}\sum_{i=1}^n
1\{ X_{ij}\leq x \},\qquad x \in\reals, j = 1,\ldots, p.
\]

The \emph{one-step estimator} is then defined by
%
\begin{equation}
\label{eqn:OSE} \OSE= \PLE+ \frac{1}{n} \sum_{i=1}^n
I^{*-1}(\PLE){} \dell_{\theta}^* \bigl( \hat F_{n,1}(X_{i1}),
\ldots, \hat F_{n,p}(X_{ip}); \PLE \bigr).
\end{equation}
The initial estimator being rank-based, the one-step estimator is
rank-based, too. In particular, $\OSE$ is invariant with respect to
strictly increasing transformations applied to each of the $p$ variables.

\citet{hoff:niu:wellner:12} raised the question whether it is possible
to develop a rank-based, semiparametrically efficient estimator for
Gaussian copula models.
The following theorem gives a positive answer: the proposed one-step
estimator is efficient.

\begin{theorem}
\label{THM:EFFICIENCY} 
Suppose that the parametrization $\theta\mapsto R(\theta)$ satisfies
Assumption~\ref{ass:parametrization}. Also assume that the initial
estimator $\PLE^*$ satisfies Assumption~\ref{ass:rootnconsistent}. Then
$\OSE$ 
is an efficient, rank-based estimator of $\theta$ in the semiparametric
Gaussian copula model $\model$, that is, for all $F_1, \ldots, F_p \in
\Fac$ and $\theta\in\Theta$ we have \eqref{eqn:efficiency} with
$\hat
{\theta}_n$ replaced by $\OSE$ and thus, as $n \to\infty$,
\[
\sqrt{n} \bigl(\hat\theta_{n}^{\mathrm{OSE}} - \theta\bigr) \dto
N_k\bigl(0, I^{*-1}(\theta){}\bigr).
\]
\end{theorem}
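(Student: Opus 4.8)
The plan is to verify the asymptotic linearity \eqref{eqn:efficiency} for $\OSE$ and then invoke the convolution theorem to obtain the limiting normal distribution. The one-step construction \eqref{eqn:OSE} is a standard Newton--Raphson-type update along the estimated efficient influence function, so the proof follows the classical template for one-step estimators \citep[][Section~25.8]{vdvaart:00}, but care is needed because (a) the efficient score is evaluated at estimated margins $\hat F_{n,j}$ rather than the true ones, (b) the information matrix is evaluated at the pilot $\PLE$, and (c) rank-based estimators are not sums of i.i.d.\ terms, so an empirical-process / Hájek-projection argument is required. First I would exploit the discretization of $\PLE$ to the grid $n^{-1/2}\mathbb{Z}^k$: conditioning on $\PLE$ reduces the analysis, for each fixed grid value, to studying the stochastic expansion of the map $\vartheta \mapsto \frac1n\sum_i \effinfoMatinv[\vartheta]{}\effScore(\hat F_{n,1}(X_{i1}),\dots,\hat F_{n,p}(X_{ip});\vartheta)$ in a shrinking $n^{-1/2}$-neighbourhood of $\theta$, after which a discretization argument (as in \citet[][Section~25.8]{vdvaart:00}) removes the dependence of $\PLE$ on the data. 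The continuity of $\theta \mapsto \effinfoMatinv{}$, which follows from the continuity of $\theta \mapsto R(\theta)$ and its derivatives in Assumption~\ref{ass:parametrization} together with non-singularity of $\effinfoMat{}$ from Proposition~\ref{prop:effscore}, then lets me replace $\effinfoMatinv[\PLE]{}$ by $\effinfoMatinv{}$ at the cost of $o_P(1)$ terms.

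The core step is the stochastic expansion
\[
  \frac{1}{\sqrt n}\sum_{i=1}^n
    \effScore\bigl(\hat F_{n,1}(X_{i1}),\dots,\hat F_{n,p}(X_{ip});\vartheta_n\bigr)
  = \frac{1}{\sqrt n}\sum_{i=1}^n \effScore\bigl(F_1(X_{i1}),\dots,F_p(X_{ip});\theta\bigr)
    - \sqrt n\,\effinfoMat{}(\vartheta_n-\theta) + o_P(1),
\]
uniformly over $\vartheta_n = \theta + O(n^{-1/2})$. Here the key structural fact, emphasized after \eqref{eq:effInfFun}, is that each component of $\effScore$ is a centered quadratic form $\tfrac12\vc z'(D_\theta(\vc g_m(\theta))-\dot S_m(\theta))\vc z$ in the Gaussianized vector $\vc z$ with $z_j = \qnorm(u_j)$, so that $\effScore(\hat F_{n,1}(X_{i1}),\dots;\vartheta)$ is a smooth function of the estimated ranks. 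Substituting $\hat F_{n,j}(X_{ij}) = R_{ij}\tn/(n+1)$, I would treat the replacement of $\hat F_{n,j}$ by $F_j$ via a Hájek-type asymptotic representation for the (multivariate) empirical process $\sqrt n(\hat F_{n,j} - F_j)$ applied to quadratic functionals of normal quantiles; the tail bound \eqref{eqn:bound_Ijj} on $I_{jj}(u_j;\theta)$, derived from Mill's ratio, controls the behaviour of $z_j/\dnorm(z_j)$ near the endpoints and guarantees the requisite integrability so that the linearization error is $o_P(1)$. The derivative-in-$\vartheta$ term produces $-\sqrt n\,\effinfoMat{}(\vartheta_n-\theta)$ because $\expecs_\theta[\partial_\vartheta \effScore] = -\effinfoMat{}$ by the information identity for the efficient score, combined with a uniform law of large numbers over the $n^{-1/2}$-neighbourhood (again handled by discretization).

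Plugging this expansion into \eqref{eqn:OSE} with $\vartheta_n = \PLE$ gives
\[
  \sqrt n(\OSE-\theta)
  = \sqrt n(\PLE-\theta) + \effinfoMatinv{}\Bigl(\tfrac{1}{\sqrt n}\sum_i \effScore(F_1(X_{i1}),\dots;\theta) - \sqrt n\,\effinfoMat{}(\PLE-\theta)\Bigr) + o_P(1),
\]
and the $\sqrt n(\PLE-\theta)$ terms cancel, leaving exactly \eqref{eqn:efficiency}. The CLT for the i.i.d.\ sum of centered quadratic forms (finite variance by Lemma~\ref{prop:score_operator}(a) and Proposition~\ref{prop:effscore}) then yields $\sqrt n(\OSE-\theta)\dto N_k(0,\effinfoMatinv{})$, and efficiency in the sense of attaining the convolution-theorem bound is automatic since the influence function is the efficient one. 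Rank-basedness is immediate from the construction, as noted before the theorem. The main obstacle is the uniform-in-$\vartheta$ control of the estimated-margins term: proving that replacing $\hat F_{n,j}$ by $F_j$ inside the quadratic forms costs only $o_P(1)$, uniformly over a $n^{-1/2}$-ball in $\vartheta$, requires a careful empirical-process argument near $u_j \in \{0,1\}$ where $\qnorm$ and $1/\dnorm(\qnorm(u_j))$ blow up, and this is exactly where the Mill's-ratio bounds and the specific quadratic-form structure of $\effScore$ do the heavy lifting.
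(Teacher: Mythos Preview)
Your overall architecture---discretize the pilot, reduce to a local expansion uniformly over an $n^{-1/2}$-ball, then cancel the $\sqrt{n}(\PLE-\theta)$ term---is exactly the paper's strategy, and your treatment of the $\vartheta$-derivative part via a Taylor expansion and the identity $\cov_\theta(\effScore,\Score)=\effinfoMat{}$ is a legitimate (and arguably more direct) alternative to the paper's route through Proposition~A.10 of \cite{vdvaart:88}.

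The main divergence is in how you handle the replacement of $\hat F_{n,j}$ by $F_j$ (the paper's property~(P2)). You propose a H\'ajek-projection / weighted empirical-process linearization controlled by the Mill's-ratio bound~\eqref{eqn:bound_Ijj}. The paper instead writes the difference as a quadratic piece $Q_n$ plus a cross term $L_n$ in $\tilde{\vc Z}_i-\vc Z_i$: the quadratic piece is killed directly by the De~Wet--Venter result $n^{-1/2}\sum_i(\tilde Z_{ij}-Z_{ij})^2=o_P(1)$, and the cross term $L_{nj}$ vanishes because the $j$th column of $A_{m,n}=D_{\theta_n}(\vc g_m(\theta_n))-\dot S_m(\theta_n)$ is orthogonal to the $j$th row of $R(\theta_n)$, i.e.\ $\diag(R(\theta_n)A_{m,n})=0$. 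This is precisely the tangent-space orthogonality of Lemma~\ref{prop:orthogonalitytangentspace}, and it makes $a_{m,n;\cdot j}'\vc Z_i$ conditionally mean-zero given $Z_{ij}$, after which a conditional-variance bound plus contiguity finishes. Your sketch gestures at ``the specific quadratic-form structure'' doing the heavy lifting but never isolates this orthogonality, which is really the crux: without it you are left with a term of the form $n^{-1/2}\sum_i(\tilde Z_{ij}-Z_{ij})\,a'\vc Z_i$ that a generic boundary-weighted empirical-process bound does not obviously kill. The bound~\eqref{eqn:bound_Ijj} you invoke was used earlier for boundedness of the score operator, not for this step. Your approach can be made to work, but the same condition $\diag(R A)=0$ is what would make the first-order empirical-process contribution cancel in your linearization as well; identifying it explicitly is the missing ingredient.
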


\begin{pf*}{Outline of the proof}
We give a detailed proof in Appendix B 
in the supplement and present here a short outline. First, we show that
it suffices to prove the theorem for uniform marginals. Let $\mathbf{U}_i$,
$i \in\mathbb{N}$, be i.i.d. random vectors with law $\prs_\theta$
under $\pr_\theta$. Following the lines of the proof of the efficiency
of \emph{parametric} one-step estimators [\citet
{bkrw:93}, Theorem~2.5.2], we show that \eqref{eqn:efficiency} holds
if:
\begin{longlist}[(P1)]
\item[(P1)]
for any sequence $\theta_n=\theta+ h_n/\sqrt{n}$, with $h_n \in
\reals
^k$ bounded, we have
\[
\frac{1}{\sqrt{n}}\sum_{i=1}^n
\dell_{\theta}^*(\mathbf{U}_i;\theta_n) =
\frac{1}{\sqrt{n}}\sum_{i=1}^n
\dell_{\theta}^*(\mathbf{U}_i;\theta ) - I^*_{}(
\theta) h_n + o(1;\pr_\theta);
\]
\item[(P2)]
for any sequence $\theta_n=\theta+ h_n/\sqrt{n}$, with $h_n \in
\reals
^k$ bounded, we have
\[
\frac{1}{\sqrt{n}} \sum_{i=1}^n
\dell_{\theta}^* \bigl( \hat F_{n,1}(U_{i1}),\ldots,
\hat F_{n,p}(U_{ip}); \theta_n \bigr) =
\frac{1}{\sqrt{n}} \sum_{i=1}^n
\dell_{\theta}^*( \mathbf{U}_i;\theta_n ) + o(1;
\pr_\theta).
\]
\end{longlist}
We show that (P1) holds by exploiting the smoothness of the efficient
score and Proposition A.10 in \citet{vdvaart:88}.
Statement (P2) follows from a modification of Corollary~3.1 in \citet{hoff:niu:wellner:12}.
\end{pf*}

\begin{remark}
\label{rem:discr}
(i) The special structure \eqref{eqn:effscore_reduction} of the efficient
score allows us to follow the lines of the proof of the efficiency of
\emph{parametric} one-step estimators. We thus do not need to use
sample-splitting as in, for example, \citet{klaassen:87}.\vspace*{-6pt}
%
\begin{longlist}[(ii)]
\item[(ii)]
The update step \eqref{eqn:OSE} is simple to implement in practice.
Regarding the calculation of the efficient information matrix in \eqref
{eqn:effInfoMat}, recall that the inner product $\langle{\cdot
},{\cdot}\rangle_\theta$
is defined in \eqref{eq:innerProductDef} and that $\dot{S}_m(\theta) =
- S(\theta)   \dot{R}_m(\theta)   S(\theta)$.
\end{longlist}
\end{remark}

\section{Quadratic influence functions}
\label{SEC:QUADFUN} 


Structured Gaussian copula models are completely specified by the
parametrization $\theta\mapsto R(\theta)$. This implies that
conditions for, for example, efficiency of a given estimator or
adaptivity of the model at a certain value of $\theta$ can be expressed
in terms of this parametrization. In addition, all relevant score and
influence functions turn out to be quadratic in the Gaussianized
observations in the sense of \eqref{eq:quadInfFunc}. In this section,
we will give finite-dimensional algebraic conditions under which an
estimator with such an influence function is regular (Proposition~\ref
{prop:quadCondRegularity}) or even efficient (Proposition~\ref
{prop:quadCondEfficiency}). In addition, Proposition~\ref
{prop:quadCondAdaptivity} gives a simple condition for adaptivity of
the model at some value of $\theta$.

For practice, the most relevant result in this section concerns the
characterization of those Gaussian copula models for which the
pseudo-likelihood estimator, $\hat{\theta}_n^{
\mathrm{PLE}}$, is efficient.
Recall, see \citet{genest:ghoudi:rivest:95}, that $\hat{\theta
}_n^{\mathrm{PLE}}$
is the maximizer of
$
\theta\mapsto\sum_{i=1}^n \log c_\theta( \hat F_{n,1}(X_{i1}),
\ldots,
\hat F_{n,p}(X_{ip}) )
$.

\begin{theorem}
\label{thm:PLEefficiency}
Suppose that the parametrization $\theta\mapsto R(\theta)$ satisfies
Assumption~\ref{ass:parametrization}. The pseudo-likelihood estimator
$\hat{\theta}_n^{\mathrm{PLE}}$ is
semiparametrically efficient at $\prs
_{\theta,F_1,\ldots,F_p} \in\model$ in the sense of \eqref{eqn:efficiency} if
and only if, for every $m=1,\ldots,k$, the matrix
%
\begin{equation}
\label{eq:condPLEefficiency} L_m(\theta) - \tfrac{1}{2} \bigl( \diag
\bigl(L_m(\theta)\bigr) R(\theta) + R(\theta) \diag
\bigl(L_m(\theta)\bigr) \bigr) 
\end{equation}
with
\[
L_m(\theta) = R(\theta) \diag \bigl(\dot{R}_m(\theta)
S(\theta) \bigr) R(\theta)
\]
belongs to the linear span of $\dot{R}_1(\theta),\ldots,\dot
{R}_k(\theta)$.
\end{theorem}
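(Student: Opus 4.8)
The plan is to recognise the pseudo-likelihood estimator (PLE) as an estimator with a \emph{quadratic} influence function in the sense of Section~\ref{SEC:QUADFUN}, to apply the efficiency criterion established there (Proposition~\ref{prop:quadCondEfficiency}), and to translate the resulting membership condition by conjugating with $R(\theta)$. As in the proof of Theorem~\ref{THM:EFFICIENCY}, I would first note that it suffices to argue at $\prs_\theta$: the PLE is rank-based, and by~\eqref{eqn:effscore_reduction} the efficiency equivalence~\eqref{eqn:efficiency} does not depend on the margins.

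\emph{Step 1: the PLE has a quadratic influence function.} The estimator $\hat\theta_n\ple$ solves $\sum_{i=1}^n \Score\bigl(\hat F_{n,1}(X_{i1}),\dots,\hat F_{n,p}(X_{ip});\hat\theta_n\ple\bigr)=0$. By the standard asymptotic theory for pseudo-likelihood estimators \citep{genest:ghoudi:rivest:95}, $\sqrt n(\hat\theta_n\ple-\theta)$ is asymptotically linear with influence function equal to $\infoMatinv{}$ times the corrected pseudo-score whose $m$-th coordinate is $\dell_{\theta,m}(\vc u;\theta)-\sum_{j=1}^p W_{jm}(u_j)$, the correction for estimating the $j$-th margin being $W_{jm}(u)=\int_0^1 (1\{u\le v\}-v)\,\expecs_\theta[\dell_{\theta,m}(\vc U;\theta)\,\dell_j(\vc U;\theta)\mid U_j=v]\,\rd v$. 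The first task is to evaluate this: using the conditional law of $\vc Z$ given $Z_j$, together with $\expecs_\theta[\dell_{\theta,m}(\vc U;\theta)\mid U_j=v]\equiv 0$ (the marginal of $U_j$ being uniform for all $\theta$) and $(R(\theta)\dot S_m(\theta)R(\theta))_{jj}=-(\dot R_m(\theta))_{jj}=0$ (since $R_{jj}(\theta)\equiv1$), the conditional expectation collapses to $-(\dot R_m(\theta)S(\theta))_{jj}\,\qnorm(v)/\dnorm(\qnorm(v))$, whence, after integration and centring, $W_{jm}(u)=\tfrac12(\dot R_m(\theta)S(\theta))_{jj}(\qnorm(u)^2-1)$. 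Substituting this, and using $\dell_{\theta,m}(\vc u;\theta)=-\tfrac12\bigl(\vc z'\dot S_m(\theta)\vc z-\tr(\dot S_m(\theta)R(\theta))\bigr)$ and $\tr(\dot S_m(\theta)R(\theta))=-\tr(\dot R_m(\theta)S(\theta))$, the constants cancel and the $m$-th corrected pseudo-score becomes the centred quadratic form generated by $A\ple_m(\theta):=-\dot S_m(\theta)-\diag(\dot R_m(\theta)S(\theta))$, where $\diag(M)$ denotes the diagonal matrix with the same diagonal as $M$. Thus the PLE has a quadratic influence function in the sense of~\eqref{eq:quadInfFunc}, generated by $A\ple_1(\theta),\dots,A\ple_k(\theta)$.

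\emph{Step 2: regularity and the efficiency criterion.} Because $R(\theta)\dot S_m(\theta)=-\dot R_m(\theta)S(\theta)$, one has $(R(\theta)\dot S_m(\theta))_{jj}=-(\dot R_m(\theta)S(\theta))_{jj}$, while $(R(\theta)\diag(\dot R_m(\theta)S(\theta)))_{jj}=(\dot R_m(\theta)S(\theta))_{jj}$; these cancel, so $(R(\theta)A\ple_m(\theta))_{jj}=0$ for all $j$. By Lemma~\ref{prop:orthogonalitytangentspace} the corrected pseudo-score is orthogonal to the nuisance tangent space, so the PLE is regular (Proposition~\ref{prop:quadCondRegularity}). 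By Proposition~\ref{prop:quadCondEfficiency}, such a regular, quadratic-influence-function estimator is efficient if and only if each generator $A\ple_m(\theta)$ lies in $\linspan\{\dot S_1(\theta),\dots,\dot S_k(\theta)\}+\{D_\theta(\vc b):\vc b\in\reals^p\}$ — the span of the parametric score directions and of the quadratic part of the nuisance tangent space, a $(k+p)$-dimensional subspace of $\symp$ by Part~A of the proof of Proposition~\ref{prop:effscore} (equivalently, the span of the $A\ple_m(\theta)$ must coincide with that of the efficient-score generators $D_\theta(\vc g_{m'}(\theta))-\dot S_{m'}(\theta)$). Since $-\dot S_m(\theta)$ already lies there, this is equivalent to $\diag(\dot R_m(\theta)S(\theta))\in\linspan\{\dot S_1(\theta),\dots,\dot S_k(\theta)\}+\{D_\theta(\vc b):\vc b\in\reals^p\}$ for every $m=1,\dots,k$.

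\emph{Step 3: conjugation by $R(\theta)$, and the main obstacle.} The linear bijection $A\mapsto R(\theta)\,A\,R(\theta)$ of $\symp$ sends $\dot S_m(\theta)=-S(\theta)\dot R_m(\theta)S(\theta)$ to $-\dot R_m(\theta)$, sends $D_\theta(\vc b)=S(\theta)\diag(\vc b)+\diag(\vc b)S(\theta)$ to $R(\theta)\diag(\vc b)+\diag(\vc b)R(\theta)$, and sends $\diag(\dot R_m(\theta)S(\theta))$ to $L_m(\theta)$. Hence the condition of Step~2 is equivalent to $L_m(\theta)\in\linspan\{\dot R_1(\theta),\dots,\dot R_k(\theta)\}+\{R(\theta)\diag(\vc b)+\diag(\vc b)R(\theta):\vc b\in\reals^p\}$ for every $m$; and writing such a membership as $L_m(\theta)=\sum_{m'}c_{m'}\dot R_{m'}(\theta)+R(\theta)\diag(\vc b)+\diag(\vc b)R(\theta)$ and taking diagonals (using $(\dot R_{m'}(\theta))_{jj}=0$) forces $\diag(\vc b)=\tfrac12\diag(L_m(\theta))$, so the membership reduces to exactly~\eqref{eq:condPLEefficiency}. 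The genuine difficulty is concentrated entirely in Step~1: the asymptotic-linearity expansion with the explicit correction $W_{jm}$ must be justified through a weighted-empirical-process argument controlling the heavy factor $1/\dnorm(\qnorm(\cdot))$ near the endpoints $0$ and $1$ — the standard delicacy for pseudo-likelihood estimators — so the point to verify is that Assumption~\ref{ass:parametrization} suffices for the conditions of \citet{genest:ghoudi:rivest:95}. Steps~2--3 are then routine linear algebra, the only subtle point being that the vector $\vc b$ in the membership condition is forced, not free.
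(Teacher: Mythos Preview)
Your proof is correct and follows essentially the same route as the paper: identify the PLE's influence function as quadratic, verify regularity, apply Proposition~\ref{prop:quadCondEfficiency}, and conjugate by $R(\theta)$ to extract the condition~\eqref{eq:condPLEefficiency}. Two small remarks. First, where you derive the quadratic generator $A_m\ple(\theta)=-\dot S_m(\theta)-\diag(\dot R_m(\theta)S(\theta))$ by explicitly computing the Genest--Ghoudi--Rivest correction terms, the paper simply cites \citet[Section~5]{hoff:niu:wellner:12} for the same formula (written as $-\dot S_m(\theta)+\diag(R(\theta)\dot S_m(\theta))$, which is the same thing); your derivation is more self-contained but, as you acknowledge, requires verifying the empirical-process regularity conditions. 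Second, your regularity check in Step~2 is incomplete: Proposition~\ref{prop:quadCondRegularity} requires both~\eqref{eq:quadRegularityDef1} and~\eqref{eq:quadRegularityDef2}, and you only verify the first (orthogonality to $\dot T_\theta$). The paper fills this by noting that $-\dot S_{m'}(\theta)$ is orthogonal to every diagonal matrix (since $\inpr{-\dot S_{m'}(\theta)}{\diag(\vc d)}=\tfrac12\tr(\dot R_{m'}(\theta)\diag(\vc d))=0$), so $\inpr{A_m\ple(\theta)}{-\dot S_{m'}(\theta)}=\inpr{-\dot S_m(\theta)}{-\dot S_{m'}(\theta)}=I_{mm'}(\theta)$, and the factor $I^{-1}(\theta)$ in the influence function then gives~\eqref{eq:quadRegularityDef2}.
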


The proof of Theorem~\ref{thm:PLEefficiency} is given in Appendix C %
in the supplement, after the proofs of the other results in this
section, on which it depends. Theorem~\ref{thm:PLEefficiency}
immediately implies that the pseudo-likelihood estimator is efficient
in the unrestricted model, in which each of the $p(p-1)/2$ off-diagonal
entries of the correlation matrix is a parameter. Indeed, in the
unrestricted model, the matrices $\dot{R}_1(\theta),\ldots,\dot
{R}_k(\theta)$, with $k = p(p-1)/2$, span the space of symmetric
matrices with zero diagonal, to which the matrix in \eqref
{eq:condPLEefficiency} clearly belongs. In Section~\ref{SEC:EXAMPLES},
efficiency of the pseudo-likelihood estimator will also be established
for exchangeable correlation matrices (Example~\ref{vb:exchangeable})
and for the versatile class of factor models (Example~\ref{ex:FactorModels}).

For the present section, it is convenient to assume that all marginal
distributions are uniform, so that the relevant nonparametric part of
the tangent space becomes $\mathcal{T}_{\prs_{\theta}}$. We maintain
the notation
$\mathbf{Z} = (Z_1,\ldots, Z_p)'$ with $Z_j = \Phi^{-1}(U_j)$ for
$j=1,\ldots,p$, where $\mathbf{U}$ is the identity mapping on the space $(0, 1)^p$,
which is equipped with the probability measure $\prs_\theta$ induced by
the Gaussian copula $C_\theta$. The distribution of $\mathbf{Z}$ is then
$N_p( 0, R(\theta) )$. Recall that $\operatorname{Sym}(p)$ is the
space of real,
symmetric $p \times p$ matrices.

\subsection{Quadratic functions and the tangent space}

For $A \in\operatorname{Sym}(p)$, define the function $\qdell_A \in
\rL_2^0(\prs
_\theta
)$ via
%
\begin{equation}
\label{eq:quadInfFunc} \qdell_A(\mathbf{U}) = \tfrac{1}2 \bigl(
\mathbf{Z}^\prime A \mathbf{Z} - \expecs_\theta\bigl[
\mathbf{Z}^\prime A \mathbf{Z}\bigr] \bigr).
\end{equation}
We call a score or influence function of this form ``quadratic'' and
``generated by $A$.'' For instance, the parametric score \eqref
{eqn:score} for $\theta_m$ is generated by $A = -\dot{S}_m(\theta)$,
while the semiparametrically efficient score \eqref{eqn:effscorem} is
generated by $A = D_\theta(\mathbf{g}_m(\theta)) - \dot{S}_m(\theta)$ with
$D_\theta( \cdot )$ defined in \eqref{eqn:Dmatrix}. By
Theorem~\ref
{THM:EFFICIENCY}, the one-step estimator has a quadratic influence
function, and in the proof of Theorem~\ref{thm:PLEefficiency}, we will
see that the pseudo-likelihood estimator has a quadratic influence
function, too.


First we characterize the matrices $A \in\operatorname{Sym}(p)$ that generate
quadratic forms $\qdell_A(\mathbf{U})$ belonging to $\mathcal{T}_{\prs
_{\theta}}$, the
nonparametric part of the tangent space.

\begin{lemma}
\label{lem:ATS}
Suppose that the parametrization $\theta\mapsto R(\theta)$ satisfies
Assumption~\ref{ass:parametrization}. The matrix $A \in\operatorname{Sym}(p)$
generates an element $\qdell_{A}(\mathbf{U})$ of $\mathcal{T}_{\prs
_{\theta}}$ if and
only if there exists a vector $\mathbf{b} \in\reals^p$ such that $A =
D_\theta(\mathbf{b})$.
\end{lemma}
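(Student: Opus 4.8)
The plan is to prove both directions of the equivalence, essentially by combining the orthogonality criterion of Lemma~\ref{prop:orthogonalitytangentspace} with an explicit identification of the orthocomplement of $\tangentspaceU$ inside the space of quadratic forms. First I would recall that, up to additive constants, quadratic forms $\qdell_A(\vc{U})$ are in bijection with matrices $A \in \symp$: by the same argument already given after \eqref{eq:innerProductDef}, $\inpr{A}{A}=0$ forces $A=0$, so $A \mapsto \qdell_A(\vc{U})$ is injective, and the inner product $\inpr{\cdot}{\cdot}$ on $\symp$ is exactly the $\rL_2(\prs_\theta)$ inner product of the corresponding quadratic forms.

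For the ``if'' direction, suppose $A = D_\theta(\vc{b}) = S(\theta)\diag(\vc{b}) + \diag(\vc{b}) S(\theta)$ for some $\vc{b}\in\reals^p$. I would simply exhibit $\qdell_A(\vc{U})$ as an element of $\tangentspaceU = \range\scoreOp$ by finding $\vc{h}\in\pL$ with $\scoreOp\vc{h} = \qdell_A(\vc{U})$; the natural candidate, guided by \eqref{eqn:effscore_generator}--\eqref{eqn:effscorem} in Proposition~\ref{prop:effscore}, is $h_j(u_j) = b_j(1 - z_j^2)$ (with $z_j = \qnorm(u_j)$), since there the generator $\vc{h}_m^\theta$ with weights $\vc{g}_m(\theta)$ produces precisely the quadratic form generated by $D_\theta(\vc{g}_m(\theta)) - \dot S_m(\theta)$, and the $-\dot S_m(\theta)$ piece is the contribution of the parametric direction $\alpha$, absent here. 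Concretely I would verify, using the identity $H_j(u_j) = \int_0^{u_j} h_j = -b_j z_j \dnorm(z_j)$ together with the formula \eqref{eqn:ellj} for $\dell_j$, that $h_j(U_j) + \dell_j(\vc{U};\theta)H_j(U_j)$ summed over $j$ collapses to $\frac12(\vc{Z}'D_\theta(\vc{b})\vc{Z} - \expecs_\theta[\cdots])$. This is a routine but slightly lengthy computation.

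For the ``only if'' direction, the efficient route is a dimension count. The matrices $D_\theta(\vc{e}_1),\dots,D_\theta(\vc{e}_p)$ are linearly independent (this is part of Part~A of the proof of Proposition~\ref{prop:effscore}), so $\{D_\theta(\vc{b}) : \vc{b}\in\reals^p\}$ is a $p$-dimensional subspace of $\symp$, which has dimension $p(p+1)/2$. By the ``if'' direction just proved, this $p$-dimensional subspace maps into $\tangentspaceU$ under $A\mapsto\qdell_A$. On the other hand, Lemma~\ref{prop:orthogonalitytangentspace} says $\qdell_A(\vc{U})\perp\tangentspaceU$ iff $(R(\theta)A)_{jj}=0$ for all $j$; the map $A \mapsto ((R(\theta)A)_{jj})_{j=1}^p$ is linear from $\symp$ to $\reals^p$, and I would check it is surjective (e.g. its adjoint with respect to $\inpr{\cdot}{\cdot}$ sends $\vc{b}$ to $D_\theta(\vc{b})$, which is injective), so its kernel — the set of $A$ with $\qdell_A\perp\tangentspaceU$ — has dimension $p(p+1)/2 - p$. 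Hence the orthogonal projection of the space of quadratic forms onto $\tangentspaceU$ is exactly $p$-dimensional, and since $\{D_\theta(\vc{b})\}$ already accounts for $p$ independent quadratic forms inside $\tangentspaceU$, every quadratic form in $\tangentspaceU$ must be of the form $\qdell_{D_\theta(\vc{b})}$.

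The main obstacle I anticipate is making the duality in the last paragraph airtight: one must verify carefully that $\inpr{D_\theta(\vc{b})}{A} = \sum_j b_j (R(\theta) A R(\theta))_{jj}$ or the analogous identity that identifies $\vc{b}\mapsto D_\theta(\vc{b})$ as (a scalar multiple of) the adjoint of $A \mapsto (\operatorname{diag}(R(\theta)AR(\theta)))$, and hence that $\{D_\theta(\vc{b})\}$ is precisely the orthogonal complement (within $\symp$, under $\inpr{\cdot}{\cdot}$) of the kernel appearing in Lemma~\ref{prop:orthogonalitytangentspace}. Once that orthogonal-complement identity is in place, both inclusions follow immediately: $\qdell_A\in\tangentspaceU$ forces $A$ to be $\inpr{\cdot}{\cdot}$-orthogonal to every $A'$ with $(R(\theta)A')_{jj}=0$, i.e. $A\in\{D_\theta(\vc{b}):\vc{b}\in\reals^p\}$. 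An alternative, avoiding the dimension count, is to quote directly the explicit projection formula from Proposition~\ref{prop:effscore}: writing any $A\in\symp$ and asking $\qdell_A\in\tangentspaceU$ means $\qdell_A$ equals its own projection, and the projection of a quadratic form onto $\tangentspaceU$ can be computed by the same argument that produced \eqref{eqn:effscore_generator_weights}, always landing in $\{D_\theta(\vc{b})\}$; I would mention this as the cleaner bookkeeping if the adjoint computation proves cumbersome.
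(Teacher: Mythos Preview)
Your proposal is correct, but it takes a genuinely different route from the paper's proof. For the ``only if'' direction, you argue by duality and dimension count: establish the adjoint identity $\inpr{D_\theta(\vc{b})}{A} = \sum_j b_j (R(\theta)A)_{jj}$ (note: $(R(\theta)A)_{jj}$, not $(R(\theta)AR(\theta))_{jj}$ as you wrote --- this follows from $R(\theta)D_\theta(\vc{b})R(\theta) = \diag(\vc{b})R(\theta) + R(\theta)\diag(\vc{b})$), conclude that $\{D_\theta(\vc{b})\}$ is exactly the $\inpr{\cdot}{\cdot}$-orthogonal complement in $\symp$ of the kernel $K=\{A':(R(\theta)A')_{jj}=0\}$, and then use that $q_A\in\tangentspaceU$ forces $A\perp K$ via Lemma~\ref{prop:orthogonalitytangentspace}. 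This is sound and in fact anticipates the content of Corollary~\ref{lem:quadTangentSpaceOrth}.

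The paper instead uses Lemma~\ref{prop:score_operator}(b) directly: if $q_A(\vc{U}) = [\scoreOp\vc{h}](\vc{U})$, then the explicit inverse formula forces $h_j(U_j) = \expecs_\theta[q_A(\vc{U})\mid U_j] = \tfrac12(R(\theta)AR(\theta))_{jj}(Z_j^2-1)$, and plugging this $\vc{h}$ back into $\scoreOp$ (via the computation in \eqref{eq:Oh}) yields $q_{D_\theta(\vc{b})}(\vc{U})$ with $b_j = \tfrac12(R(\theta)AR(\theta))_{jj}$. Equating quadratic forms gives $A=D_\theta(\vc{b})$. This is shorter --- one computation handles both directions --- and is directly constructive, whereas your argument is more structural and essentially proves Lemma~\ref{lem:ATS} and Corollary~\ref{lem:quadTangentSpaceOrth} in one stroke. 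A minor slip in your ``if'' direction: with $h_j(u) = b_j(1-z^2)$ the primitive is $H_j(u) = +b_j z\dnorm(z)$, not $-b_j z\dnorm(z)$.
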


Lemma~\ref{lem:ATS} motivates us to define a linear subspace of
$\operatorname{Sym}(p)$,
\[
\dot{T}_\theta = \bigl\{ D_\theta(\mathbf{b}) 
 |
\mathbf{b} \in\reals^p \bigr\},
\]
whose elements generate quadratic scores in the nonparametric part,
$\mathcal{T}_{\prs_{\theta}}$, of the tangent space. It follows from
part A in the
proof of Proposition~\ref{prop:effscore} that the dimension of $\dot
{T}_\theta$ is $p$.

Recall that we had endowed the space $\operatorname{Sym}(p)$ with the
inner product
$\langle{\cdot},{\cdot}\rangle_\theta$ in~\eqref
{eq:innerProductDef}. In the present
notation, the inner product can be written as
%
\begin{equation}
\label{eq:inpr:q} \langle{A},{B}\rangle_\theta= \cov_\theta \bigl(
\qdell_A(\mathbf{U}), \qdell_B(\mathbf{U}) \bigr),\qquad A, B
\in\operatorname{Sym}(p).
\end{equation}
It follows that the map $A \mapsto\qdell_A(\mathbf{U})$ constitutes an
isometry between $\operatorname{Sym}(p)$ and the linear subspace
$\mathcal{Q}_{\prs
_\theta} = \{ \qdell_A(\mathbf{U}) \mid A \in\operatorname{Sym}(p)\}$
of $\rL_2^0(
\prs
_\theta)$. Lemma~\ref{lem:ATS} then states that the intersection of
$\mathcal{Q}_{\prs_\theta}$ and $\mathcal{T}_{\prs_{\theta}}$ is
isometric to
$\dot
{T}_\theta$.

The construction of the one-step estimator \eqref{eqn:OSE} was based on
the efficient score function, which in turn was found through a
projection of the parametric score on (the orthocomplement of)
$\mathcal{T}_{\prs_{\theta}}$ (Proposition~\ref{prop:effscore}).
For quadratic
forms, the calculation of such projections is particularly simple.



\begin{corollary}
\label{lem:quadTangentSpaceOrth}
Suppose that the parametrization $\theta\mapsto R(\theta)$ satisfies
Assumption~\ref{ass:parametrization} and let $\theta\in\Theta$ and
$A\in\operatorname{Sym}(p)$. Statements \textup{(a)} and \textup{(b)} in Lemma~\ref
{prop:orthogonalitytangentspace} are both equivalent to:
\begin{longlist}[(c)]
\item[(c)] the matrix $A$ is orthogonal to $\dot{T}_\theta$.
\end{longlist}
%
\end{corollary}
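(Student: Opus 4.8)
The plan is to show that condition (c) is equivalent to condition (b) of Lemma~\ref{prop:orthogonalitytangentspace}, namely that $(R(\theta)A)_{jj} = 0$ for $j = 1, \ldots, p$; combined with the already-established equivalence of (a) and (b) in that lemma, this will close the argument. The natural route is to unwind what "orthogonal to $\dot{T}_\theta$" means using the inner product $\inpr{\cdot}{\cdot}$ from \eqref{eq:innerProductDef}. Since $\dot{T}_\theta = \{ D_\theta(\vc{b}) \mid \vc{b} \in \reals^p \}$, the matrix $A$ is orthogonal to $\dot{T}_\theta$ if and only if $\inpr{A}{D_\theta(\vc{b})} = 0$ for every $\vc{b} \in \reals^p$.

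The key computation is to express $\inpr{A}{D_\theta(\vc{b})}$ explicitly. Recall $D_\theta(\vc{b}) = S(\theta)\diag(\vc{b}) + \diag(\vc{b})S(\theta)$ and $\inpr{A}{B} = \frac12 \tr(A\,R(\theta)\,B\,R(\theta))$. Substituting and using $R(\theta)S(\theta) = S(\theta)R(\theta) = \Id$ together with cyclicity of the trace, both terms collapse: $\tr(A\,R(\theta)\,S(\theta)\,\diag(\vc{b})\,R(\theta)) = \tr(A\,\diag(\vc{b})\,R(\theta)) = \tr(R(\theta)\,A\,\diag(\vc{b}))$, and similarly for the other term $\tr(A\,R(\theta)\,\diag(\vc{b})\,S(\theta)\,R(\theta)) = \tr(A\,R(\theta)\,\diag(\vc{b}))$. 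Hence $\inpr{A}{D_\theta(\vc{b})} = \frac12 \bigl( \tr(R(\theta)A\,\diag(\vc{b})) + \tr(AR(\theta)\,\diag(\vc{b})) \bigr)$. Writing $\tr(M\,\diag(\vc{b})) = \sum_{j=1}^p M_{jj} b_j$ and using that $\tr(R(\theta)A\,\diag(\vc{b})) = \tr(AR(\theta)\,\diag(\vc{b}))$ because $R(\theta)A$ and $AR(\theta)$ are transposes of each other (both $R(\theta)$ and $A$ are symmetric) and a matrix and its transpose have the same diagonal, we arrive at $\inpr{A}{D_\theta(\vc{b})} = \sum_{j=1}^p (R(\theta)A)_{jj}\, b_j$.

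Since this holds for all $\vc{b} \in \reals^p$, the orthogonality $\inpr{A}{D_\theta(\vc{b})} = 0$ for all $\vc{b}$ is equivalent to $(R(\theta)A)_{jj} = 0$ for $j = 1, \ldots, p$, which is precisely statement (b). Combined with the equivalence (a) $\Leftrightarrow$ (b) from Lemma~\ref{prop:orthogonalitytangentspace}, all three statements are equivalent, completing the proof. I do not anticipate a genuine obstacle here; the only point requiring minor care is the bookkeeping in the trace manipulations — in particular correctly using $RS = SR = \Id$ to telescope the products and invoking symmetry of $R(\theta)$ and $A$ to identify the diagonals of $R(\theta)A$ and $AR(\theta)$ — but this is routine linear algebra rather than a conceptual difficulty.
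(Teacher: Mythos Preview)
Your proof is correct and follows essentially the same route as the paper: both compute $\inpr{A}{D_\theta(\vc{b})}$ directly, use $R(\theta)S(\theta)=\Id$ and cyclicity of the trace to collapse the expression to $\tr\bigl(\diag(\vc{b})\{R(\theta)A + AR(\theta)\}\bigr)$, and then invoke symmetry of $A$ and $R(\theta)$ to conclude that this vanishes for all $\vc{b}$ if and only if $\diag(R(\theta)A)=0$, i.e., condition~(b). The only cosmetic difference is that the paper writes the inner product as $\inpr{D_\theta(\vc{b})}{A}$ rather than $\inpr{A}{D_\theta(\vc{b})}$ and suppresses the factor $\tfrac12$ (irrelevant since the expression is set to zero).
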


\begin{corollary}
\label{cor:projection}
If the parametrization $\theta\mapsto R(\theta)$ satisfies
Assumption~\ref{ass:parametrization}, the projection of $A \in
\operatorname{Sym}(p)$
on $\dot{T}_\theta$ is equal to $D_\theta(\mathbf{b})$ and the projection
of $q_A(\mathbf{U})$ on $\mathcal{T}_{\prs_{\theta}}$ is equal to
$q_{D_\theta(\mathbf
{b})}(\mathbf{U})$, where $\mathbf{b} \in\reals^p$ is the unique solution of
%
\begin{equation}
\label{eq:quadProjDotT2} \diag \bigl( R(\theta) A - \diag(\mathbf{b}) - R(\theta) \diag(
\mathbf{b}) S(\theta) \bigr) = \mathbf{0}.
\end{equation}
\end{corollary}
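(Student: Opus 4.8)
The plan is to identify the projection of $A \in \symp$ onto the $p$-dimensional subspace $\dot{T}_\theta$ with respect to the inner product $\inpr{\cdot}{\cdot}$ from \eqref{eq:innerProductDef}. By definition, the projection is the unique element $D_\theta(\vc{b}) \in \dot{T}_\theta$ (uniqueness of the representing vector $\vc{b}$ follows from $\dim \dot{T}_\theta = p$, as noted after Lemma~\ref{lem:ATS}) such that the residual $A - D_\theta(\vc{b})$ is orthogonal to $\dot{T}_\theta$. By Corollary~\ref{lem:quadTangentSpaceOrth}, orthogonality of a symmetric matrix $E$ to $\dot{T}_\theta$ is equivalent to condition~(b) of Lemma~\ref{prop:orthogonalitytangentspace}, namely $(R(\theta) E)_{jj} = 0$ for all $j = 1, \ldots, p$, i.e.\ $\diag(R(\theta) E) = \vc{0}$. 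Applying this with $E = A - D_\theta(\vc{b})$ gives exactly
\[
  \diag\bigl( R(\theta) A - R(\theta) D_\theta(\vc{b}) \bigr) = \vc{0}.
\]
So the first step is simply to invoke these two earlier results to reduce the claim to verifying that $R(\theta) D_\theta(\vc{b})$ has the right diagonal, namely $\diag(R(\theta) D_\theta(\vc{b})) = \diag\bigl( \diag(\vc{b}) + R(\theta)\diag(\vc{b}) S(\theta) \bigr)$.

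The second step is this matrix-algebra verification. Using the definition \eqref{eqn:Dmatrix}, $D_\theta(\vc{b}) = S(\theta)\diag(\vc{b}) + \diag(\vc{b}) S(\theta)$, so
\[
  R(\theta) D_\theta(\vc{b})
  = R(\theta) S(\theta) \diag(\vc{b}) + R(\theta) \diag(\vc{b}) S(\theta)
  = \diag(\vc{b}) + R(\theta) \diag(\vc{b}) S(\theta),
\]
since $R(\theta) S(\theta) = \Id$. Taking the diagonal of both sides and substituting into the reduced equation yields precisely \eqref{eq:quadProjDotT2}. Existence and uniqueness of the solution $\vc{b}$ are not a separate issue: they are guaranteed by the fact that the orthogonal projection onto $\dot{T}_\theta$ exists and is unique, combined with the fact that the map $\vc{b} \mapsto D_\theta(\vc{b})$ is a linear isomorphism onto $\dot{T}_\theta$ (injectivity being part of Part~A of the proof of Proposition~\ref{prop:effscore}). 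Hence the linear system \eqref{eq:quadProjDotT2} in the $p$ unknowns $b_1, \ldots, b_p$ has a unique solution.

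Finally, the statement about $q_A(\vc{U})$ transfers automatically: the map $A \mapsto q_A(\vc{U})$ is an isometry from $\symp$ onto $\mathcal{Q}_{\prs_\theta}$ carrying $\dot{T}_\theta$ onto $\mathcal{Q}_{\prs_\theta} \cap \tangentspaceU$ (Lemma~\ref{lem:ATS} and \eqref{eq:inpr:q}), and one checks that this $\mathcal{Q}_{\prs_\theta} \cap \tangentspaceU$ is exactly the part of $\tangentspaceU$ onto which quadratic forms in $\mathcal{Q}_{\prs_\theta}$ project — because, by Corollary~\ref{lem:quadTangentSpaceOrth}, the orthocomplement of $\tangentspaceU$ intersected with $\mathcal{Q}_{\prs_\theta}$ corresponds to the orthocomplement of $\dot{T}_\theta$ in $\symp$. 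Thus the orthogonal decomposition $A = D_\theta(\vc{b}) + (A - D_\theta(\vc{b}))$ in $\symp$ maps under the isometry to the orthogonal decomposition of $q_A(\vc{U})$ with respect to $\tangentspaceU$, giving $\Pi(q_A(\vc{U}) \mid \tangentspaceU) = q_{D_\theta(\vc{b})}(\vc{U})$. I do not expect any real obstacle here; the only point requiring a moment's care is making explicit that the diagonal-extraction step above is reversible, i.e.\ that \eqref{eq:quadProjDotT2} genuinely characterizes $\vc{b}$ rather than being merely necessary — and this is covered by the dimension/isomorphism argument in the second step.
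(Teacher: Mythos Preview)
Your proof is correct and follows essentially the same approach as the paper: invoke Corollary~\ref{lem:quadTangentSpaceOrth} to reduce the projection condition to $\diag(R(\theta)(A - D_\theta(\vc{b}))) = \vc{0}$, then expand $R(\theta) D_\theta(\vc{b}) = \diag(\vc{b}) + R(\theta)\diag(\vc{b}) S(\theta)$ to obtain \eqref{eq:quadProjDotT2}. Your version is in fact more complete, since you also spell out the isometry argument for the second claim about $q_A(\vc{U})$, which the paper's proof leaves implicit.
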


Corollary~\ref{cor:projection} sheds new light on Proposition~\ref
{prop:effscore}. The projection of the parametric score $\Scorem( \mathbf
{U}; \theta) = \qdell_{- \dot{S}_m(\theta)}( \mathbf{U} )$ on
$\mathcal{T}_{\prs_{\theta}}$ is given by $\Scorem( \mathbf{U}; \theta
) - \dell_{\theta,m}^*(
\mathbf{U}; \theta) = \qdell_{ D_\theta( \mathbf{g}_m(\theta) )}(\mathbf
{U})$ with
the vector $\mathbf{g}_m(\theta)$ given by \eqref
{eqn:effscore_generator_weights}. But the latter equation says that
$-\mathbf{g}_m(\theta)$ is equal to the solution to \eqref
{eq:quadProjDotT2} with the matrix $A$ equal to $- \dot{S}_m(\theta)$.



Corollaries \ref{lem:quadTangentSpaceOrth} and \ref{cor:projection}
greatly simplify all tangent space projection calculations. Any
quadratic score or influence function $\qdell_A(\mathbf{U})$ generated by a
matrix $A$ such that $\diag( R(\theta) A ) = 0$ is automatically
orthogonal to the infinite-dimensional space $\mathcal{T}_{\prs
_{\theta}}$ in $\rL
_2^0( \prs_\theta)$. This property will be used extensively below in
the investigation of regularity and efficiency of estimators and of
adaptivity of the model.


\begin{remark}
By studying rank-based likelihoods, \citet{hoff:niu:wellner:12} conclude
that, with Gaussian marginals, the $p$ unknown marginal variances
generate the least-favorable directions. Indeed, using the calculations
in their Theorem~4.1, one may verify directly that these marginal
variances generate scores of the form $q_{D_\theta(\mathbf{b})}(\mathbf{U})$
with $\mathbf{b} \in\reals^p$.
\end{remark}

\subsection{Regularity}\label{sec:regularity}

The study of (semi)parametric efficiency is usually limited to \emph
{regular} estimators, which are estimators with the property that their
limiting distribution, after proper centering and rescaling, is the
same under any sequence of local alternatives [\citet{vdvaart:00}, Section~23.5,
page~365]. Consider an estimator $\hat{\theta}_n$ whose
components are asymptotically linear with quadratic influence
functions, that is, for all $\theta\in\Theta$ and all $m = 1,\ldots,
k$ there exists $A_m(\theta) \in\operatorname{Sym}(p)$ such that
%
\begin{equation}
\label{eq:aslin} \sqrt{n} ( \hat{\theta}_{n,m} - \theta_m )
= \frac{1}{\sqrt{n}} \sum_{i=1}^n
\qdell_{A_m(\theta)}( \mathbf{U}_i ) + o(1; \pr_\theta).
\end{equation}
(Recall that we focus on rank-based estimators, so that we may, without
loss of generality, assume that the margins are uniform.)
For an asymptotically linear estimator, regularity is equivalent to the
statement that the orthogonal projection of its influence function on
the full tangent set is equal to the efficient influence function
[\citet
{bkrw:93}, Proposition~3.3.1]. Since the full tangent set \eqref
{eqn:fulltangentspace} of the Gaussian copula model is spanned by the
nonparametric part $\mathcal{T}_{\prs_{\theta}}$ and the parametric
scores $\Scorem
(\mathbf{U}; \theta)$, $m = 1, \ldots, k$, regularity of $\hat{\theta
}_n$ in
\eqref{eq:aslin} is equivalent to
%
\begin{eqnarray}
\label{eq:condRegularity1} &&\qdell_{A_m(\theta)}( \mathbf{U} ) \perp\mathcal{T}_{\prs_{\theta
}},
\\
\label{eq:condRegularity2}& &\cov_\theta \bigl( \qdell_{A_m(\theta)}( \mathbf{U} ),
\dot{\ell }_{\theta,m'}( \mathbf{U} ) \bigr) = \delta_{m = m'},
\end{eqnarray}
for all $m, m' \in\{1,\ldots, k\}$. These equations pose restrictions
on $A_m(\theta)$, characterized by the following proposition.

\begin{proposition}
\label{prop:quadCondRegularity}
Suppose that the parametrization $\theta\mapsto R(\theta)$ satisfies
Assumption~\ref{ass:parametrization}. Let $\hat{\theta}_n$ be an
estimator sequence satisfying \eqref{eq:aslin} for every $\theta\in
\Theta$ and $m = 1,\ldots, k$. Then $\hat{\theta}_{n}$ is regular at
$\prs_\theta$ if and only if, for all $m, m' \in\{1,\ldots, k\}$,
%
\begin{eqnarray}
\label{eq:quadRegularityDef1} &&A_m(\theta) \perp\dot{T}_\theta,
\\
\label{eq:quadRegularityDef2} &&\bigl\langle{A_m(\theta)},{-\dot{S}_{m'}(
\theta)}\bigr\rangle_\theta = \delta_{m=m'},
\end{eqnarray}
or equivalently, if and only if, for all $m, m' \in\{1,\ldots, k\}$,
%
\begin{eqnarray}
\label{eq:quadCondRegularity1} \diag \bigl( R(\theta) A_m(\theta) \bigr) &=&
\mathbf{0},
\\
\label{eq:quadCondRegularity2} \tr \bigl( A_m(\theta) \dot{R}_{m'}(
\theta) \bigr) &=& 2 \delta_{m=m'}.
\end{eqnarray}
\end{proposition}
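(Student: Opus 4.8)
The plan is to establish the two equivalences claimed in Proposition~\ref{prop:quadCondRegularity}: first that regularity is equivalent to \eqref{eq:quadRegularityDef1}--\eqref{eq:quadRegularityDef2}, and second that this pair of conditions is equivalent to \eqref{eq:quadCondRegularity1}--\eqref{eq:quadCondRegularity2}. For the first equivalence, I would start from the characterization of regularity for asymptotically linear estimators recalled just before the statement: $\hat\theta_n$ is regular at $\prs_\theta$ if and only if the orthogonal projection of each component of its influence function onto the full tangent space equals the corresponding component of the efficient influence function, which (as spelled out in \eqref{eq:condRegularity1}--\eqref{eq:condRegularity2}) amounts to $\qdell_{A_m(\theta)}(\vc{U}) \perp \tangentspaceU$ together with $\cov_\theta(\qdell_{A_m(\theta)}(\vc{U}), \dot\ell_{\theta,m'}(\vc{U})) = \1{m=m'}$ for all $m,m'$. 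I would then translate each of these two conditions into matrix language using the isometry $A \mapsto \qdell_A(\vc{U})$ between $\symp$ and $\mathcal{Q}_{\prs_\theta}$ established via \eqref{eq:inpr:q}.

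For the orthogonality condition $\qdell_{A_m(\theta)}(\vc{U}) \perp \tangentspaceU$: Lemma~\ref{lem:ATS} identifies $\mathcal{Q}_{\prs_\theta} \cap \tangentspaceU$ with the image of $\dot T_\theta$ under the isometry, and Corollary~\ref{lem:quadTangentSpaceOrth} makes precise that $A \perp \dot T_\theta$ (equivalently $\diag(R(\theta)A) = 0$) characterizes exactly when $\qdell_A(\vc{U})$ is orthogonal to all of $\tangentspaceU$, not merely to the quadratic part. This is the content linking \eqref{eq:condRegularity1} to both \eqref{eq:quadRegularityDef1} and \eqref{eq:quadCondRegularity1}. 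For the covariance condition: the parametric score $\dot\ell_{\theta,m'}(\vc{U})$ is itself quadratic, generated by $A = -\dot S_{m'}(\theta)$, so by \eqref{eq:inpr:q} we have $\cov_\theta(\qdell_{A_m(\theta)}(\vc{U}), \dot\ell_{\theta,m'}(\vc{U})) = \inpr{A_m(\theta)}{-\dot S_{m'}(\theta)}$; setting this equal to $\1{m=m'}$ gives \eqref{eq:quadRegularityDef2}. To pass from \eqref{eq:quadRegularityDef2} to \eqref{eq:quadCondRegularity2}, I would expand $\inpr{A_m(\theta)}{-\dot S_{m'}(\theta)} = \frac12\tr(A_m(\theta)\,R(\theta)\,(-\dot S_{m'}(\theta))\,R(\theta))$ from the definition \eqref{eq:innerProductDef}, and then substitute $\dot S_{m'}(\theta) = -S(\theta)\dot R_{m'}(\theta)S(\theta)$ to obtain $\frac12\tr(A_m(\theta)\,R(\theta)\,S(\theta)\,\dot R_{m'}(\theta)\,S(\theta)\,R(\theta)) = \frac12\tr(A_m(\theta)\,\dot R_{m'}(\theta))$, using $R(\theta)S(\theta) = \Id$; equating to $\1{m=m'}$ yields exactly \eqref{eq:quadCondRegularity2}.

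The equivalence between the $\diag(R(\theta)A_m(\theta)) = \vc 0$ form in \eqref{eq:quadCondRegularity1} and the abstract orthogonality $A_m(\theta)\perp\dot T_\theta$ in \eqref{eq:quadRegularityDef1} is precisely Corollary~\ref{lem:quadTangentSpaceOrth} combined with Lemma~\ref{prop:orthogonalitytangentspace}, so I can invoke those directly; similarly the translation of the inner-product condition into the trace condition is the routine algebra sketched above. I expect the main (and only mildly delicate) point to be making sure that the orthogonality condition \eqref{eq:condRegularity1}, which a priori only asserts orthogonality of $\qdell_{A_m(\theta)}(\vc{U})$ to $\tangentspaceU$, is genuinely equivalent to $A_m(\theta)$ being orthogonal to the finite-dimensional subspace $\dot T_\theta$ of $\symp$ --- i.e.\ that it suffices to test against the quadratic elements of $\tangentspaceU$. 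This is exactly what Corollary~\ref{lem:quadTangentSpaceOrth} provides (a quadratic form orthogonal to $\dot T_\theta$ is automatically orthogonal to the whole of $\tangentspaceU$), so once that corollary is in hand the proof is a short chain of substitutions. The remaining steps are bookkeeping: collecting the two translated conditions and noting that the pair \eqref{eq:quadRegularityDef1}--\eqref{eq:quadRegularityDef2} and the pair \eqref{eq:quadCondRegularity1}--\eqref{eq:quadCondRegularity2} are term-by-term equivalent.
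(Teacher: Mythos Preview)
Your proposal is correct and follows essentially the same route as the paper: invoke Corollary~\ref{lem:quadTangentSpaceOrth} (together with Lemma~\ref{prop:orthogonalitytangentspace}) to obtain the three-way equivalence of \eqref{eq:condRegularity1}, \eqref{eq:quadRegularityDef1} and \eqref{eq:quadCondRegularity1}, and then use the fact that $\dot\ell_{\theta,m'} = \qdell_{-\dot S_{m'}(\theta)}$ combined with the inner-product formula \eqref{eq:innerProductDef} and the identity $R(\theta)\,(-\dot S_{m'}(\theta))\,R(\theta) = \dot R_{m'}(\theta)$ to obtain the three-way equivalence of \eqref{eq:condRegularity2}, \eqref{eq:quadRegularityDef2} and \eqref{eq:quadCondRegularity2}. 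The paper's proof is simply a compressed version of exactly this argument.
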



A matrix $A_m(\theta)$ satisfying the two conditions \eqref
{eq:quadCondRegularity1} and \eqref{eq:quadCondRegularity2} is called a
\emph{regular influence matrix} for $\theta_m$ at $\theta$. For the
one-step estimator, for instance, the influence matrices are
%
\begin{equation}
\label{eq:Aose} A_m^*(\theta) = \sum_{m'=1}^k
\bigl(I^{*-1}(\theta)\bigr)_{mm'} \bigl( D\bigl(
\mathbf{g}_{m'}(\theta) \bigr) - \dot{S}_{m'}(\theta)
\bigr),
\end{equation}
which can be seen from the right-hand side of \eqref{eqn:efficiency}
and the expression for the efficient score in Proposition~\ref{prop:effscore}.
By construction, the matrices $A_m^*(\theta)$ in \eqref{eq:Aose} are
regular influence matrices. Of course, this property also follows more
generally from the above description of regularity and the fact that
the one-step estimator is asymptotically linear in the efficient
influence function. In the proof of Theorem~\ref{thm:PLEefficiency}, we
check that the pseudo-likelihood estimator is regular, too.






For each $m = 1,\ldots, k$ and $\theta\in\Theta$, equations \eqref
{eq:quadRegularityDef1} and \eqref{eq:quadRegularityDef2} pose $p + k$
independent linear restrictions on $A_m(\theta)$. Indeed, the dimension
of $\dot{T}_p(\theta)$ is $p$ while the matrices $-\dot{S}_1(\theta),\ldots, -\dot{S}_m(\theta)$ are linearly independent and are not
contained in $\dot{T}_p(\theta)$; see part A of the proof of
Proposition~\ref{prop:effscore}. It follows that the set of regular
influence matrices for a given component of $\theta$ is an affine
subspace of $\operatorname{Sym}(p)$ of dimension $p(p+1)/2 - (p+k) =
p(p-1)/2 - k$.


\subsection{Efficiency}
\label{sec:efficiency}

In the unrestricted model, each pairwise correlation being a parameter,
there are $k = p(p-1)/2$ parameters, so that there is a unique regular
influence matrix for each component $\theta_m$, which then must be
equal to the efficient one, $A_m^*(\theta)$. In Example~\ref
{ex:quadRankCorr}, this matrix will be identified with the one
generating the influence function of the rank correlation estimator,
proving efficiency of the latter.

In structured Gaussian copula models, that is, those models where the
matrices $\dot{S}_m(\theta)$, $m=1,\ldots,k$, span a subspace of
dimension $k = \dim( \Theta)$ less than $p(p-1)/2$, multiple regular
quadratic influence functions exist.
Within this set of regular influence matrices, the efficient influence
matrices $A_m^*(\theta)$ admit the following characterization.

\begin{proposition}
\label{prop:quadCondEfficiency}
Suppose that the parametrization $\theta\mapsto R(\theta)$ satisfies
Assumption~\ref{ass:parametrization}. Consider a regular, rank-based
estimator $\hat\theta_n$ for which there exist matrices $B_1(\theta),\ldots, B_k(\theta) \in\operatorname{Sym}(p)$ such that the
influence functions of all
components $\hat{\theta}_{n,1},\ldots, \hat{\theta}_{n,k}$ are
quadratic and are given by matrices 
belonging to the linear span of $B_1(\theta),\ldots, B_k(\theta)$.
Write $B_{m,R}(\theta) = R(\theta) B_m(\theta) R(\theta)$. Then
$\hat
\theta_n$ is efficient at $\prs_\theta$ in the sense of \eqref
{eqn:efficiency} if and only if, for each $m$, the matrix
\begin{equation}
\label{eq:quadGeneralEfficiencyCriterionR} B_{m,R}(\theta) - \tfrac{1}{2} \bigl( \diag\bigl(
B_{m,R}(\theta) \bigr) R(\theta) + R(\theta) \diag\bigl(
B_{m,R}(\theta) \bigr) \bigr)
\end{equation}
belongs to the linear span of $\dot{R}_1(\theta),\ldots,\dot
{R}_k(\theta)$.
\end{proposition}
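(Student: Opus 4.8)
The strategy is to reduce the efficiency question to a tangent-space projection computation and then translate that computation into the stated algebraic condition on $\dot R_1(\theta),\dots,\dot R_k(\theta)$, using Corollaries~\ref{lem:quadTangentSpaceOrth} and~\ref{cor:projection}. By definition, $\hat\theta_n$ is efficient at $\prs_\theta$ if and only if its influence matrices $B_m(\theta)$, which generate the quadratic influence functions $q_{B_m(\theta)}(\vc U)$, coincide with the efficient influence matrices $\Aosem(\theta)$ of \eqref{eq:Aose}. Since the estimator is assumed regular, each $B_m(\theta)$ is already a regular influence matrix: it satisfies $\diag(R(\theta)B_m(\theta)) = \vc 0$ (so $q_{B_m(\theta)}(\vc U)\perp\tangentspaceU$ by Corollary~\ref{lem:quadTangentSpaceOrth}) and the normalization $\tr(B_m(\theta)\dot R_{m'}(\theta)) = 2\1{m=m'}$. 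The efficient influence matrices are characterized among all regular ones by the additional requirement that they be expressible through the efficient scores, equivalently that they lie in the linear span of $\{D_\theta(\vc g_{m'}(\theta)) - \dot S_{m'}(\theta)\}_{m'=1}^k$. Since $\effScore$ spans the same subspace of $\rL_2^0(\prs_\theta)$ as the orthogonal projections of the parametric scores $\Scorem(\vc U;\theta)$ onto $(\tangentspaceU)^\perp$, efficiency of $\hat\theta_n$ is equivalent to: for each $m$, the function $q_{B_m(\theta)}(\vc U)$ lies in $\linspan\{\effScorem[m'](\vc U;\theta)\}_{m'} = \linspan\{\Pi(\Scorem[m'](\vc U;\theta)\mid(\tangentspaceU)^\perp)\}_{m'}$.

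\textbf{Key steps.} First I would use the assumption that all the $B_m(\theta)$ lie in $\linspan\{B_1(\theta),\dots,B_k(\theta)\}$ together with the nonsingularity of the efficient information matrix to argue that efficiency is equivalent to the single condition that each $q_{B_m(\theta)}(\vc U)$ belong to $\linspan\{q_{-\dot S_{m'}(\theta)}(\vc U) + \tangentspaceU\}$, i.e., that modulo the tangent space $\tangentspaceU$ each $B_m(\theta)$ be a linear combination of the $-\dot S_{m'}(\theta)$. (This is the crucial reformulation: an efficient estimator's influence function differs from a linear combination of parametric scores only by an element of the tangent space, and the $B_m(\theta)$ already live in a $k$-dimensional span, so the correspondence must be a bijection.) Next, by Corollary~\ref{cor:projection}, subtracting from $B_m(\theta)$ its $\dot T_\theta$-projection $D_\theta(\vc b_m)$ gives the component orthogonal to $\dot T_\theta$; but since $B_m(\theta)\perp\dot T_\theta$ already (regularity, condition~\eqref{eq:quadRegularityDef1}), one instead works with $-\dot S_{m'}(\theta)$ and its $\dot T_\theta$-component. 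The cleanest route is: $B_m(\theta)\in\linspan\{-\dot S_1(\theta),\dots,-\dot S_k(\theta)\} + \dot T_\theta$, which (since $\dot S_{m'}(\theta) = -S(\theta)\dot R_{m'}(\theta)S(\theta)$ and $\dot T_\theta = \{S\diag(\vc b) + \diag(\vc b)S\}$) is equivalent, upon conjugating by $R(\theta)$ on both sides, to $R(\theta)B_m(\theta)R(\theta) \in \linspan\{\dot R_1(\theta),\dots,\dot R_k(\theta)\} + \{R(\theta)\diag(\vc b) + \diag(\vc b)R(\theta) : \vc b\in\reals^p\}$. Writing $B_{m,R}(\theta) = R(\theta)B_m(\theta)R(\theta)$, this says $B_{m,R}(\theta) - (R\diag(\vc b) + \diag(\vc b)R)/2 \in \linspan\{\dot R_{m'}(\theta)\}$ for some $\vc b$. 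Finally I would pin down $\vc b$: because $B_m(\theta)\perp\dot T_\theta$, equivalently $\diag(R(\theta)B_m(\theta)) = \vc 0$ (regularity condition~\eqref{eq:quadCondRegularity1}), and because every $\dot R_{m'}(\theta)$ has zero diagonal (correlation matrices are parametrized with fixed unit diagonal, so $\diag(\dot R_{m'}(\theta)) = \vc 0$), taking the diagonal of the conjugated relation forces $\vc b = \diag(B_{m,R}(\theta))$ up to scaling — indeed $\diag(R^{-1}B_{m,R}R^{-1}) = \diag(B_m) $... more directly, $\diag\bigl(B_{m,R}(\theta)\bigr)$ is the right choice, yielding exactly the matrix in~\eqref{eq:quadGeneralEfficiencyCriterionR}.

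\textbf{Main obstacle.} The subtle point — and the step I expect to require the most care — is justifying that the $\dot T_\theta$-adjustment vector $\vc b$ is uniquely determined and equals $\diag(B_{m,R}(\theta))$, rather than being a free parameter. This hinges on two facts that must be invoked precisely: that $\diag(\dot R_{m'}(\theta)) = \vc 0$ for all $m'$ (so the span of the $\dot R_{m'}$ consists of hollow matrices and cannot absorb any diagonal contribution), and that the map $\vc b \mapsto \diag\bigl(R(\theta)\diag(\vc b) + \diag(\vc b)R(\theta)\bigr) = 2\diag(\vc b)$ is a bijection on $\reals^p$, so the diagonal of the conjugated relation reads off $\vc b$ uniquely. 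Once this is settled, matching diagonals in $B_{m,R}(\theta) \equiv (R\diag(\vc b)+\diag(\vc b)R)/2 \pmod{\linspan\{\dot R_{m'}\}}$ gives $\vc b = \diag(B_{m,R}(\theta))$, and substituting back produces the criterion~\eqref{eq:quadGeneralEfficiencyCriterionR}. A secondary bookkeeping point is checking that the linear-span hypothesis on $B_1(\theta),\dots,B_k(\theta)$ together with regularity (which already imposes $p+k$ independent constraints, per the dimension count at the end of Section~\ref{sec:regularity}) is exactly what is needed for the "modulo tangent space" reformulation to be an equivalence and not merely an implication; I would spell out that the efficient influence matrices form a single point in the affine space of regular influence matrices, reached precisely when the span condition holds.
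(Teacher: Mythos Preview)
Your approach is essentially the paper's: reduce efficiency to the condition that each $B_m(\theta)$ lies in $\dot T_\theta + \linspan\{\dot S_1(\theta),\ldots,\dot S_k(\theta)\}$, conjugate by $R(\theta)$ to land in $\{R\diag(\vc b) + \diag(\vc b)R : \vc b\in\reals^p\} + \linspan\{\dot R_{m'}(\theta)\}$, and read off $\vc b$ from the diagonals using $\diag(R(\theta)) = \Id$ and $\diag(\dot R_{m'}(\theta)) = 0$.

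There is one conflation to fix, however. You repeatedly treat the $B_m(\theta)$ as the influence matrices of $\hat\theta_n$, but the proposition only asserts that the actual influence matrices --- call them $A_m(\theta)$ --- lie in $\linspan\{B_1(\theta),\ldots,B_k(\theta)\}$. Consequently your claims that ``each $B_m(\theta)$ is already a regular influence matrix'' and that $B_m(\theta)\perp\dot T_\theta$ are unwarranted; only the $A_m(\theta)$ satisfy the conditions of Proposition~\ref{prop:quadCondRegularity}. The argument is easily patched: regularity forces the $A_m(\theta)$ to be linearly independent (from $\tr(A_m\dot R_{m'}) = 2\delta_{mm'}$), so $\linspan\{A_m(\theta)\} = \linspan\{B_m(\theta)\}$ and membership in $\dot T_\theta + \linspan\{\dot S_{m'}\}$ for one set is equivalent to membership for the other. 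Note also that the orthogonality $B_m\perp\dot T_\theta$ you invoke in pinning down $\vc b$ is both unneeded and generally false; the correct identification of $\vc b = \tfrac{1}{2}\diag(B_{m,R}(\theta))$ comes purely from taking the diagonal of the conjugated relation, exactly as you sketch in your ``Main obstacle'' paragraph.
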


The characterization is based essentially on the fact that, by
Proposition~\ref{prop:quadCondRegularity}, the efficient influence
matrices $A_m^*(\theta)$ are the only regular influence matrices that
belong to the space
%
\begin{equation}
\label{eq:TangentSpace} \operatorname{span} \bigl( \dot{T}_\theta\cup\bigl\{
\dot{S}_1(\theta),\ldots, \dot{S}_k(\theta ) \bigr\}
\bigr).
\end{equation}
Moreover, the projection of any other regular influence matrix
$A_m(\theta)$ on the space~\eqref{eq:TangentSpace} is equal to
$A_m^*
(\theta)$.
The efficiency criterion for the pseudo-likelihood estimator in
Theorem~\ref{thm:PLEefficiency} is essentially a particular case of
Proposition~\ref{prop:quadCondEfficiency}.

\subsection{Adaptivity}
\label{subsec:adaptivity}

If the efficient information matrix $I^*(\theta)$ is equal to the
Fisher information matrix $I(\theta)$ in the parametric model with
known margins, the fact of not knowing the margins does not make a
difference asymptotically for the efficient estimation of $\theta$. For
semiparametric Gaussian copula models, there is a simple criterion for
the occurrence of this phenomenon, called adaptivity.

\begin{proposition}
\label{prop:quadCondAdaptivity}
The semiparametric Gaussian copula model is adaptive at $\prs_{\theta,F_1,\ldots,F_p} \in\model$ if and only if
%
\begin{equation}
\label{eq:quadCondAdaptivity} \diag \bigl( R(\theta) \dot{S}_m(\theta) \bigr)
 = 0,\qquad
m=1,\ldots,k.
\end{equation}
\end{proposition}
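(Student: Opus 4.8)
The plan is to reduce adaptivity to an orthogonality statement about the parametric scores and then read off the stated condition from Lemma~\ref{prop:orthogonalitytangentspace}. As a preliminary, observe that both the efficient information matrix $\effInfoMat$ and the parametric Fisher information $I(\theta)$ are unchanged if the margins $F_1,\ldots,F_p$ are replaced by uniform ones: the former is noted in Section~\ref{sec:effscore}, the latter follows from the transformation structure exploited in Section~\ref{sec:knownmarginals}. Hence adaptivity at $\prs_{\theta,F_1,\ldots,F_p}$ amounts to the condition $\effInfoMat = I(\theta)$, which involves $\theta$ only, and I may work with uniform margins and the nonparametric tangent space $\tangentspaceU$.

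First I would write the information loss as a Gram matrix. By Proposition~\ref{prop:effscore} and Corollary~\ref{cor:projection}, $\Scorem(\vc{U};\theta) = \effScorem(\vc{U};\theta) + P_m$, where $P_m := \Pi(\Scorem(\vc{U};\theta) \mid \tangentspaceU) = q_{D_\theta(\vc{g}_m(\theta))}(\vc{U}) \in \tangentspaceU$ while $\effScorem(\vc{U};\theta) \perp \tangentspaceU$. Taking covariances, $I_{mm^\prime}(\theta) = \effinfoMat{mm^\prime} + \cov_\theta(P_m, P_{m^\prime})$ for all $m, m^\prime$, so that $I(\theta) - \effInfoMat$ is the Gram matrix $[\cov_\theta(P_m, P_{m^\prime})]_{m,m^\prime=1}^k$. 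A positive semidefinite Gram matrix vanishes if and only if each of its generating vectors vanishes; therefore adaptivity holds if and only if $P_m = 0$ in $\rL_2^0(\prs_\theta)$ for every $m$, that is, if and only if $\Scorem(\vc{U};\theta) \perp \tangentspaceU$ for every $m$.

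Next I would turn this orthogonality into the algebraic condition. The parametric score $\Scorem(\vc{U};\theta)$ is the quadratic form $q_{-\dot{S}_m(\theta)}(\vc{U})$ generated by $A = -\dot{S}_m(\theta) \in \symp$. Applying Corollary~\ref{lem:quadTangentSpaceOrth} (equivalently Lemma~\ref{prop:orthogonalitytangentspace}) with this $A$ shows that $q_{-\dot{S}_m(\theta)}(\vc{U}) \perp \tangentspaceU$ if and only if $\diag(R(\theta)(-\dot{S}_m(\theta))) = 0$, that is, $\diag(R(\theta)\dot{S}_m(\theta)) = 0$. Combining the three steps yields \eqref{eq:quadCondAdaptivity}.

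I do not foresee a genuine obstacle here: the result is a bookkeeping consequence of the tangent-space and quadratic-form machinery already assembled. The two points worth a written line are the reduction to uniform margins and the identity $I(\theta) - \effInfoMat = [\cov_\theta(P_m,P_{m^\prime})]_{m,m^\prime=1}^k$, which makes ``the loss matrix vanishes'' equivalent to ``every projected score vanishes.'' As an independent check one can bypass Lemma~\ref{prop:orthogonalitytangentspace} and argue through the explicit weights \eqref{eqn:effscore_generator_weights}: $P_m = 0$ iff $D_\theta(\vc{g}_m(\theta)) = 0$ iff $\vc{g}_m(\theta) = 0$ (the linear map $\vc{b} \mapsto D_\theta(\vc{b})$ being injective, since $\dim \dot{T}_\theta = p$), iff $(\dot{R}_m(\theta) \hdot S(\theta))\ones_p = 0$; using $R(\theta)\dot{S}_m(\theta) = -\dot{R}_m(\theta)S(\theta)$, this last identity is exactly \eqref{eq:quadCondAdaptivity}.
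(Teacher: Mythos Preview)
Your proof is correct and follows essentially the same route as the paper: reduce adaptivity to orthogonality of each parametric score $\qdell_{-\dot{S}_m(\theta)}(\vc{U})$ to $\tangentspaceU$, then invoke Corollary~\ref{lem:quadTangentSpaceOrth} to obtain $\diag(R(\theta)\dot{S}_m(\theta))=0$. The paper's proof simply asserts the equivalence ``adaptive $\iff$ parametric scores $\perp\tangentspaceU$'' and applies the corollary, whereas you spell out this equivalence via the Gram-matrix identity $I(\theta)-\effInfoMat=[\cov_\theta(P_m,P_{m'})]_{m,m'}$ and add a redundant but reassuring cross-check through the explicit weights $\vc{g}_m(\theta)$.
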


Obviously, adaptivity always occurs at the independence copula
(assuming it belongs to the model) as in that case $R(\theta)$ and
$S(\theta)$ equal the identity matrix and $\diag\dot{S}_m(\theta) = -
\diag\dot{R}_m(\theta) = 0$. See Example~\ref{vb:adaptivity} for a
model which is adaptive at a copula different from the independence
one. Still, adaptivity is the exception rather than the rule: most of
the time, not knowing the margins makes inference on the copula
parameter $\theta$ more difficult.


\section{Examples and simulations}
\label{SEC:EXAMPLES}

This section presents some analytical and numerical results for the
one-step and pseudo-likelihood estimators for a number of correlation
structures. For some cases, the pseudo-likelihood estimator is
efficient (unrestricted model, exchangeable model, factor model,
Toeplitz model in $p = 3$), sometimes it is almost efficient (circular
model), and sometimes it is quite inefficient (Toeplitz model in $p =
4$). The results of a Monte Carlo study indicate that the asymptotic
approximations to the finite-sample distributions are excellent for the
one-step estimator
. Adaptivity almost never occurs, except at independence and at a
contrived example. The simulation study is implemented in MATLAB
2012a and the code is available upon request. In the simulations, the
pseudo-likelihood estimator is used as (rank-based and $\sqrt
{n}$-consistent) pilot estimator.




\begin{example}[(Unrestricted model)]
\label{ex:quadRankCorr}
In the full, unrestricted model, there are $k=p(p-1)/2$ parameters,
which can be identified with the correlations $r_{ij}$ between $Z_i$
and $Z_j$ for $i=2,\ldots,p$ and $j=1,\ldots,i-1$. Efficiency of the
normal scores rank correlation estimator for the unrestricted Gaussian
copula model was already observed in \citet{klaassen:wellner:97}. Here
we obtain this result within our general algebraic analysis of
(possibly) structured Gaussian copula models.

First, one can check that for arbitrary Gaussian copula models, the
pseudo-score equations $\sum_{i=1}^n \Scorem( \hat{F}_{n,1}( X_{i1} ),\ldots, \hat{F}_{n,p}( X_{ip} ); \theta) = 0$ are equivalent to
%
\begin{equation}
\label{eq:pseudoScoreEq} \tr \bigl( \dot{S}_m(\theta) \bigl( R(\theta) -
\hat{R}_n \bigr) \bigr) = 0,\qquad m = 1,\ldots, k,
\end{equation}
where $\hat{R}_n$ is the $p \times p$ matrix
%
\begin{equation}
\label{eq:hatRn} \hat{R}_n = \frac{1}{n} \sum
_{i=1}^n \hat{\mathbf{Z}}_{n,i} \hat{
\mathbf{Z}}_{n,i}^\prime
\end{equation}
and $\hat{Z}_{n,ij} = \Phi^{-1}( \hat{F}_{n,j}(X_{ij}) )$ for $i = 1,\ldots, n$ and $j = 1,\ldots, p$. It follows that for the unrestricted
model, the pseudo-likelihood estimator is given by $\hat{R}_n$ itself.
The normal scores rank correlation estimator is just $\hat{R}_n /
\sigma
_n^2$ with $\sigma_n^2 = n^{-1} \sum_{i=1}^n [\Phi^{-1}(i/(n+1))]^{-2}
= 1
+ O(n^{-1} \log n)$, and is therefore asymptotically equivalent to the
pseudo-likelihood estimator. But the latter was already shown to be
efficient in the paragraph following Theorem~\ref{thm:PLEefficiency}.

The fact that for the unrestricted model, the normal scores rank
correlation estimator, the pseudo-likelihood estimator and the one-step
estimator are all asymptotically equivalent is due to the uniqueness of
the regular influence matrices for the correlation parameters. Indeed,
fix $i = 2,\ldots, p$ and $j = 1,\ldots, i-1$ and consider a regular
influence matrix $A_{ij}\in\operatorname{Sym}(p)$ for $r_{ij}$. The
derivative matrix
$\dot{R}_{ij}$ of $R$ with respect to $r_{ij}$ equals the $p\times p$
matrix whose elements are all zero, except for the $(i,j)$th and
$(j,i)$th elements, which equal unity. In view of \eqref
{eq:quadCondRegularity2}, regularity implies, for every $i' = 2,\ldots, p$ and $j' = 1,\ldots, i'-1$,
\[
2 \delta_{i=i',   j=j'} = \tr ( A_{ij} \dot{R}_{i'j'} ) = 2
(A_{ij} )_{i'j'}.
\]
Hence, the regularity conditions \eqref{eq:quadCondRegularity2} imply
that all off-diagonal elements of $A_{ij}$ are zero, but for $
(A_{ij} )_{ij}= (A_{ij} )_{ji} = 1$. Subsequently,
\eqref
{eq:quadCondRegularity1} implies $ (A_{ij} )_{ii}=
(A_{ij} )_{jj}=-r_{ij}$, while the other diagonal elements are
zero. Consequently, in the unrestricted model, there is only one
quadratic regular influence function for each pair $(i, j)$, which,
therefore, must be equal to the efficient influence function in~\eqref
{eq:effInfFun}.
\end{example}

\begin{example}[(Toeplitz model)]
\label{vb:toeplitz}
A Toeplitz model for the correlation matrix arises if there are $k =
p-1$ parameters $\theta_1,\ldots, \theta_{p-1}$ such that $r_{ij} =
\theta_{|i-j|}$, where $r_{ij}$ is the correlation between $Z_i$ and
$Z_j$. In dimension $p = 3$, for instance, the model is
$R_{12}(\theta_1,\theta_2) = R_{23}(\theta_1,\theta_2) = \theta_1$ and
$R_{13}(\theta_1,\theta_2) = \theta_2$,
and a brute-force calculation using the computer algebra system \textsf
{Maxima} (version 5.29.1, \url{http://maxima.sourceforge.net}) shows that the
inverse of the efficient information matrix is equal to the asymptotic
covariance matrix of the pseudo-likelihood estimator, calculated from
(C.1)--(C.2). 
The explicit formula for $I^{*-1}(\theta)$ is given in~(D.2). %

\begin{figure}[t!]

\includegraphics{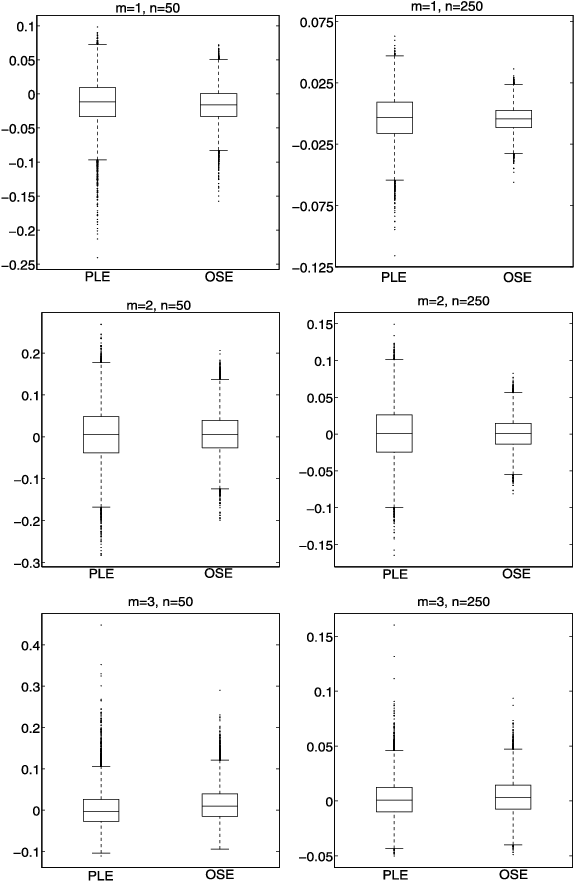}

\caption{Simulation results, based on 15,000 replications, for the
Toeplitz model in Example \protect\ref{vb:toeplitz} in dimension
$p=4$ at $\theta=(0.4945, -0.4593, -0.8462)^\prime$ and sample size
$n = 50$ (left) and $n = 250$ (right).
Boxplots of $\hat{\theta}_{n,m}^{\mathrm
{PLE}}-\theta_m$ and $\OSE[,m]-\theta_m$
for $m = 1$ (top), $m = 2$ (middle) and $m = 3$
(bottom).}\label{fig:toeplitz}\vspace*{-3pt}
\end{figure}

In dimension $p = 4$, however, the pseudo-likelihood estimator is no
longer efficient. The analytic expressions for the asymptotic variance
matrices of the one-step and pseudo-likelihood estimators are too long
to display, but for specific values of $\theta= (\theta_1, \theta_2,
\theta_3)^\prime$, they can easily be computed numerically. This
allows us to
search for values for $\theta$ in which the asymptotic relative
efficiency of the pseudo-likelihood estimator is particularly low. At
$\theta_* = (0.4945460, -0.4592764, -0.8462492)^\prime$, for instance,
the asymptotic relative efficiencies of the pseudo-likelihood estimator
with respect to the information bound are equal to $(18.3\%, 19.8\%,
96.9\%)$.
The finite-sample performance of the one-step and pseudo-likelihood
estimators were compared using 15,000 Monte Carlo samples of sizes $n =
50$ and $n = 250$. The boxplots of the estimation errors, shown in
Figure~\ref{fig:toeplitz}, confirm that for the components $\theta_1$
and $\theta_2$, the pseudo-likelihood estimator is quite inefficient at
$\theta= \theta_*$.
\end{example}

\begin{example}[(Exchangeable model)]
\label{vb:exchangeable}
The exchangeable Gaussian copula model is a one-parameter model in
which all off-diagonal entries of the $p \times p$ matrix $R(\theta)$
are equal to the same value of $\theta$ between $-1/(p-1)$ and $1$.
Efficiency of the pseudo-likelihood estimator for $\theta$ in dimension
$p = 4$ was
established in \citet{hoff:niu:wellner:12} and can, for general $p$, be
verified using
Theorem~\ref{thm:PLEefficiency}; see Appendix D 
in the supplement for some algebraic details. Using the computer
algebra system \textsf{Maxima}, we calculated the optimal asymptotic
variance for regular estimators of $\theta$ in dimensions three and four:
\[
I^{*-1}(\theta){} = \cases{ \frac{1}{3} (\theta-
1)^2 (2 \theta+ 1)^2, &\quad $ \mbox{if $p = 3$}$, \vspace*{2pt}
\cr
\frac{1}{6} (\theta- 1)^2 (3 \theta+ 1)^2, &\quad $
\mbox{if $p = 4$}$.}
\]
In contrast, if the margins are known, the optimal asymptotic variance
reduces to
\[
I^{-1}(\theta) = \cases{ I^{*-1}(\theta){} / \bigl(1 + 2
\theta^2\bigr), & \quad $\mbox{if $p = 3$}$, \vspace *{2pt}
\cr
I^{*-1}(\theta){} / \bigl(1 + 3 \theta^2\bigr), & \quad $\mbox{if
$p = 4$}$,}
\]
so that adaptivity occurs at independence $(\theta= 0)$ only.

We assessed the finite-sample performance of the one-step and pseudo-likelihood
estimators by 15,000 Monte Carlo samples of sizes $n = 50$ and $n =
250$ in dimension
$p = 3$ for $\theta$ in a grid of values between $-1/2$ and $1$; see
Figure~\ref{fig:exchangeable}\vadjust{\goodbreak} and Figure E.1 %
in the supplement. Even for $n = 50$, the finite-sample variance of the one-step
estimator is well approximated by its limit. For the pseudo-likelihood
estimator, the convergence is slower and its variance in finite samples
is a bit larger. The biases of both estimators are of comparable order
and are generally negligible relative to the variances.

\begin{figure}
\centering
\begin{tabular}{@{}c@{}}

\includegraphics{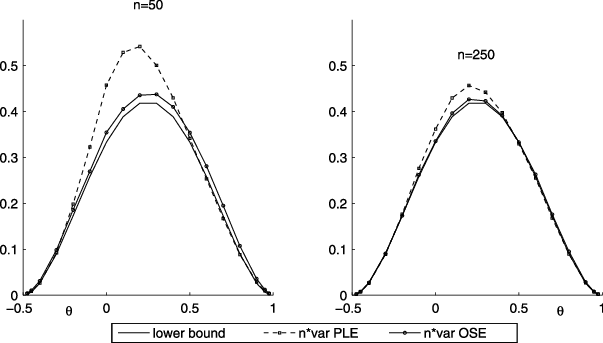}
\\
\footnotesize{(a) The semiparametric lower bound $I^{*-1}(\theta)$ and approximations
(based on 15,000 replications)}\\
\footnotesize{to $n \var_\theta( \hat\theta_{n}^{\mathrm{PLE}}
)$ and $n \var_\theta(\OSE)$
as a function of $\theta$.}\\[6pt]

\includegraphics{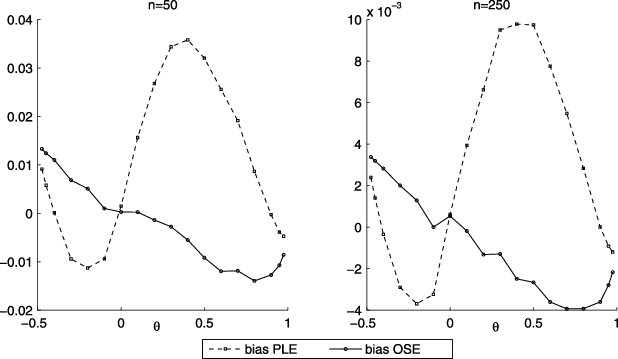}
\\
\footnotesize{(b) Approximations (based on 15,000 replications) to the biases $\expec_
\theta[\hat\theta_{n}^{\mathrm{PLE}}] - \theta$ and}\\
\footnotesize{$ \expec_\theta[\OSE] -
\theta$
as a function of $\theta$.}
\end{tabular}
\caption{Simulation results for the exchangeable model of
Example \protect\ref{vb:exchangeable} in dimension $p=3$ for
$\theta\in\{ -0.475, -0.45 \} \cup\{ k/10 \mid k=-4,\ldots, 9 \}
\cup\{ 0.95, 0.975 \}$ and $n \in\{ 50, 250
\}$.}\label{fig:exchangeable}\vspace*{-6pt}
\end{figure}

To assess the impact of the dimension, we also compare the one-step and
pseudo-likelihood
estimators in dimension $p = 100$ for 15,000 Monte Carlo samples of
size $n = 50$ at
$\theta= 0.25$. Boxplots of the estimation errors are shown in
Figure E.2 
in the supplement. Although the variances of both estimators are about
the same, the pseudo-likelihood estimator suffers from a large bias,
whereas the one-step estimator remains centered around the true value.
\end{example}

\begin{example}[(Circular model)]
\label{vb:circular}
In \citet{hoff:niu:wellner:12}, the four-dimensional circular model
\[
R(\theta) = \pmatrix{ 1 & \theta& \theta^2 & \theta\vspace*{2pt}
\cr
\theta& 1 & \theta& \theta^2 \vspace*{2pt}
\cr
\theta^2 &
\theta& 1 & \theta\vspace*{2pt}
\cr
\theta& \theta^2 & \theta& 1}, \qquad -1 <
\theta< 1,
\]
is introduced as a one-parameter Gaussian copula model where the
pseudo-likelihood estimator is not efficient. The optimal asymptotic
variance if margins are unknown, the asymptotic variance of the
pseudo-likelihood estimator, and the inverse Fisher information if
margins are known can be computed explicitly:
\begin{eqnarray*}
I^{-1}(\theta) &=& \frac{I^{*-1}(\theta)}{1 + 2 \theta^2} \le I^{*-1}(\theta)=
\frac{1}{4} \bigl(1 - \theta^2\bigr)^2
\\
&\le& I^{*-1}(\theta) \biggl( 1 + \frac{2 \theta^6}{(1 + 2\theta^2)^2} \biggr) =
\sigma^2_{\mathrm{PLE}}.
\end{eqnarray*}
%
Even though the pseudo-likelihood estimator is not efficient, its
asymptotic relative efficiency is close to $100\%$, except for $\theta$
close to $1$ or $-1$. Adaptivity occurs at independence ($\theta= 0$) only.

We assessed the finite-sample performance of the one-step and
pseudo-likelihood estimators by 15,000 Monte Carlo samples of sizes $n
= 50$ and $n = 250$ for $\theta$ in a grid of values between $-1$
and $1$. The results are comparable to those for the exchangeable
model: see Figures E.3--E.4
in the supplement.
\end{example}

\begin{example}[(Factor models)]
\label{ex:FactorModels}
Factor models are a popular tool for dimension reduction. In dimension
$p \ge3$, set
%
\begin{equation}
\label{eq:FactorModel} \mathbf{Z} = \theta\mathbf{W} + \bigl(\Id-\diag \bigl(\theta
\theta^\prime \bigr) \bigr)^{1/2} \bolds{\varepsilon},
\end{equation}
where $\theta$ denotes a $p \times q$ matrix, $q < p$, and where $\mathbf
{W}$ and $\bolds{\eps}$ are independent random vectors of dimensions $q$
and $p$, respectively, such that $(\mathbf{W}', \bolds{\eps}')' \sim N_{q+p}(
\mathbf{0}, I_{p+q})$. The parameter space $\Theta$ is an open subset of
$\{ \theta\in\reals^{p \times q} \mid(\theta\theta^\prime)_{jj}
< 1,
j=1,\ldots,p \}$; in particular, $\Id-\diag (\theta\theta
^\prime
)$ is a diagonal matrix with positive elements on the diagonal. As the
variance matrix of $\mathbf{Z}$ is a correlation matrix, \eqref
{eq:FactorModel}~defines a Gaussian copula model with
%
\begin{equation}
\label{eq:FactorModel:R} R(\theta) = \theta\theta^\prime+ \bigl(\Id-\diag \bigl(
\theta\theta^\prime \bigr) \bigr) = \Id+ \diagrem \bigl(\theta
\theta^\prime \bigr)
\end{equation}
in terms of the diagonal-removal operator $\diagrem(A) = A - \diag(A)$
for $A \in
\reals^{p \times p}$. The parameter $\theta$ is not identifiable: if
$O$ is an
orthogonal $q \times q$ matrix, then $R(\theta O) = R(\theta)$. Still, by
Corollary D.2 
in Appendix D, 
we may study efficiency of the pseudo-likelihood estimator for the parameter
$\nu$ in any reparametrization $\nu\mapsto\theta(\nu)$ satisfying
Assumption D.1
, which makes the model identified, by the criterion in Theorem~\ref
{thm:PLEefficiency}. After some calculations, which are detailed in the
Appendix, the criterion can be shown to be satisfied, confirming the
efficiency of the pseudo-likelihood estimator for Gaussian factor
copula models.
\end{example}

\begin{example}[(Adaptivity)]
\label{vb:adaptivity}
Proposition~\ref{prop:quadCondAdaptivity} gives a necessary and
sufficient criterion for adaptivity of a Gaussian copula model at a
certain value of $\theta$. Adaptivity always occurs at the independence
copula but, apart from this, is the exception rather than the rule.
With some trial and error, other (artificial) examples can be
constructed. For instance, the one-parameter model in dimension $p = 3$
given by $R_{12}(\theta) = R_{13}(\theta) = \theta^2 + 0.5$ and
$R_{23}(\theta) = \theta+ 0.25$,
for $\theta$ in a neighborhood of $0$, can be verified to be adaptive
at $\theta= 0$.
\end{example}

\section{Conclusion and extensions}
\label{sec:conclusions}

The present paper provides a semiparametrically efficient, rank-based
estimator for the copula parameter in structured Gaussian copula models
under mild
conditions on the parametrization
$\theta\mapsto R(\theta)$ of the correlation matrix. This gives a
positive answer to the conjecture formulated in \citet{hoff:niu:wellner:12} that in Gaussian copula models,
semiparametrically efficient, rank-based estimators do exist. The
estimator is based on the analysis of the tangent space structure of
the model and the explicit calculation of the efficient score function.
Simulations indicate that the large-sample distribution provides an
accurate approximation to the finite-sample distribution of the
estimator, even in large dimensions. 

Moreover, we show that inference in structured Gaussian copula models
can be studied using a convenient finite-dimensional algebraic
representation of relevant scores and influence functions. This leads
to straightforward conditions to verify the regularity or the
efficiency of existing estimators. In particular, we provide a
convenient necessary and sufficient condition for the semiparametric
efficiency of the pseudo-likelihood estimator. It follows, for example,
that this estimator is efficient in models that exhibit a suitable
factor structure. However, we also provide examples where its relative
efficiency can be as low as $20\%$. Several other concrete examples
complement the analysis.

\subsection{Other copulas}
\label{ss:othercopulas}

A natural question is how to perform rank-based, semiparametrically
efficient inference in general semiparametric copula models.
The Bickel--Le Cam technique of a one-step update based on a suitable
pilot estimate should work in general, too. However, our derivation of
the efficient score function and information matrix (Proposition~\ref
{prop:effscore}) and the proof of the asymptotic normality of the
estimator (Theorem~\ref{THM:EFFICIENCY}) heavily relied on the
Gaussian-copula assumption, by which all relevant score and influence
functions are quadratic forms in the Gaussianized observations. In
general, one will have to pass through the Sturm--Liouville equations
in (A.11) 
in the supplement.

Another point of concern is what happens if the model is misspecified.
Suppose, for instance, that the true copula is elliptical with a
correlation matrix $R(\theta)$. Then the Gaussian quasi-score $\Score$
in \eqref{eqn:score} will still be centered, and statistical inference
procedures derived from it can be expected to be $\sqrt{n}$-consistent.
Semiparametric efficiency will be lost, however, since the structure of
the tangent space will have changed.

\subsection{Regression models}
\label{SS:REGRESSION}

An important application of (Gaussian) copula models lies in joint
regression analyses [\citet{song:2000,song:li:yuan:2008,masarotto:varin:2012}]. Consider a generalized linear model for each
component of a vector of dependent variables $\mathbf{Y} = (Y_1,\ldots,
Y_p)$. The vector of covariates, $\mathbf{X}$, is common to each of the $p$
model equations. Assume that the joint distribution of the vector of
error variables $(\eps_1,\ldots, \eps_p)$ in the $p$ model equations
has copula $C_\theta$ with unknown parameter vector $\theta$. The joint
conditional distribution of $\mathbf{Y}$ given $\mathbf{X}$ is then
parametrically specified. Estimating the $p$ vectors of regression
parameters jointly with $\theta$, for instance, by maximum likelihood,
is potentially more efficient than fitting each of the $p$ univariate
models separately.

The question is what happens in the semiparametric context, when the
marginal distributions of the errors $\eps_j$ are left unspecified,
except for a restriction to identify the model parameters. Does joint
modeling still lead to more efficient inference on the regression
coefficients? Is it still possible to estimate the (Gaussian) copula
parameter efficiently using ranks?

We illustrate the possibilities and difficulties by means of a simple
example. Consider a bivariate ``dependent'' variable $\mathbf{Y} =
(Y_1,Y_2)$ and a univariate ``explanatory'' variable $X$. The precise
setup is described by
\[
Y_j = \beta_j X + \alpha_j +
\eps_j, \qquad j \in\{ 1, 2 \},
\]
where $\varepsilon_j= F_j^{-1}( \Phi(Z_j) )$, for $j \in\{1, 2\}$,
with $(Z_1,Z_2)$ bivariate standard normal with correlation $\theta\in
(-1, 1)$ and $F_1,F_2 \in\Fac$ with absolutely continuous densities
$f_1, f_2$, respectively. The model parameters are identified by
imposing a location restriction on $F_1$ and $F_2$; for instance, their
means or their medians should be zero. We assume that $X$, with $0<\var
X<\infty$, is independent of $(Z_1, Z_2)$, has an unknown density $f_X$
w.r.t. some dominating measure, and is exogenous, that is, its
distribution does not depend on $(\alpha_1,\alpha_2,\beta_1,\beta
_2,\theta,F_1,F_2)$. Finally, we assume that $f_1$ and $f_2$ have
finite Fisher information for location, that is, $\int
[(f_j'(x))^2/f_j(x)]  \, \mathrm{d}x
< \infty$ for $j \in\{1, 2\}$.%

We are both interested in efficient inference about the copula
parameter $\theta$ and about the regression coefficients $\beta_1$ and
$\beta_2$, in the presence of the nuisance parameters $\alpha_1$,
$\alpha_2$, $f_1$, and $f_2$. We follow the tangent space calculations
as in Section~\ref{SEC:TANGENT}. Conditionally on $X$, the joint
distribution of $\mathbf{Y}$ has a bivariate Gaussian copula $C_\theta$
with correlation parameter $\theta$. The density of $(\mathbf{Y}, X)$ is
given by
\begin{eqnarray*}
(y_1, y_2, x)& \mapsto& c_\theta \bigl(
F_1( y_1 - \alpha_1 - \beta_1 x),
F_2( y_2 - \alpha_2 - \beta_2 x)
\bigr)
\\
&&{}\times f_1( y_1 - \alpha_1 -
\beta_1 x) f_2( y_2 - \alpha_2 -
\beta_2 x) f_X(x).
\end{eqnarray*}
%
The scores of the Euclidean parameters are given by
%
\begin{eqnarray}
\label{eq:regression:theta} \dot\ell_\theta(\mathbf{Y}, X) &=& \frac{\rd}{\rd\theta} \log
c_\theta\bigl( F_1( \varepsilon_1 ),
F_2( \varepsilon_2 ) \bigr),
\\
\label{eq:regression:alpha} \dot\ell_{\alpha_j}(\mathbf{Y}, X) &=& - \dot
\ell_j\bigl( F_1( \varepsilon_1 ),
F_2( \varepsilon_2 );\theta \bigr) f_j(
\varepsilon_1) - \frac{f_j^\prime}{f_j}(\varepsilon_1),
\\
\label{eq:regression:beta} \dot\ell_{\beta_j}(\mathbf{Y}, X) &=& X \dot
\ell_{\alpha_j}(\mathbf{Y}, X),
\end{eqnarray}
for $j \in\{1, 2\}$. Irrespective of the specific location restriction
that is used to identify $\alpha_1$ and $\alpha_2$, the set of
distribution functions $G_j$ that are obtained as those of $\alpha_j +
\varepsilon_j = Y_j - \beta_j X_j$ is unrestricted. As a consequence,
the tangent space generated by the nuisance parameters $\alpha_1$,
$\alpha_2$, $f_1$ and $f_2$ is equal to the collection of the score
functions $s(Y_1 - \beta_1 X_1,   Y_2 - \beta_2 X_2)$ with $s$ in
$\mathcal{T}_{\prs_{\theta,G_1,G_2}}$ as in \eqref{eq:tangentspace}.

We first consider efficient estimation of the copula parameter $\theta
$. As the score function for $\theta$ and the tangent space generated
by the nuisance parameters $\alpha_1$, $\alpha_2$, $f_1$, and $f_2$
are, up to isometry, identical to those in Section~\ref{SEC:TANGENT},
the efficient score for estimating $\theta$ in the presence of the
nuisance parameters $\alpha_1$, $\alpha_2$, $f_1$, and $f_2$ remains
$\dot\ell_\theta^{*}(F_1(\varepsilon_1),F_2(\varepsilon_2);\theta)$.
Given the independence of $X$ and $(\varepsilon_1,\varepsilon_2)$, the
function $\dot\ell_\theta^{*}(F_1(\varepsilon_1),F_2(\varepsilon
_2);\theta)$ is automatically orthogonal to $\dot\ell_{\beta_j}(\mathbf
{Y},X) = X   \dot\ell_{\alpha_j}(\mathbf{Y},X)$, for $j \in\{1,2\}$.

These tangent space calculations show the possibility of efficient
estimation of Gaussian copula parameters in joint regression analyses.
A formal proof of efficiency of the OSE would be significantly
complicated by the fact that one now has to rely on \emph{aligned
ranks}. That is, the ranks to be used for the computation of the
initial estimator and the update step would be those of the residuals
based on some initial estimates $\hat{\alpha}_j$ and $\hat{\beta}_j$,
for $j \in\{1,2\}$. Techniques for this exist [\citeauthor
{hallin:vermandele:werker:2006} (\citeyear
{hallin:vermandele:werker:2006}) and the references therein] but require
subtle analysis of remainder terms.

Usually, the interest lies in the estimation of the regression
parameters $\alpha_j$ and $\beta_j$ for $j \in\{1,2\}$. In order to
identify $\alpha_j$, an identification restriction on the location of
$f_j$ is needed. Focusing on rank-based procedures, a natural choice is
to use a median restriction, that is, to impose $\int_{-\infty}^0 f(x)
\,\rd x = 1/2$. In univariate settings, this problem has been studied
extensively and semiparametrically efficient inference procedures can
be based on signs and ranks [\citet{hallin:vermandele:werker:2006}].

Concerning the estimation of $\beta_j$, its efficient score is given by
the residual of the projection of $\dot\ell_{\beta_j}(\mathbf{Y},X)$ on the
tangent space generated by the score functions for $\alpha_1, \alpha_2,
f_1, f_2$ and $\beta_{3-j}$.
Elementary calculations (see Appendix F 
in the supplement) show that this efficient score is given by
%
\begin{equation}
\label{eq:effScoreBetaJoint} \dot\ell^*_{\beta_j}(Y,X) = (\dot\ell_{\alpha_j} -
c_j \dot \ell _{\alpha_{3-j}} ) ( X - \expec X ),
\end{equation}
with $c_j=\operatorname{cov}( \dot\ell_{\alpha_1},\dot\ell
_{\alpha
_2})/\operatorname{var}( \dot\ell_{\alpha_{3-j}} )$.

It is interesting to compare this efficient score to the univariate
regression case, that is, to the case where only $(Y_j,X)$ is observed.
It is known [see Example~3 in \citet{bickel:1982}] that in that case
the efficient score is given by
%
\begin{equation}
\label{eq:effScoreBetaUniv} 
-\frac{f_j^\prime}{f_j}(\varepsilon_j) (X -
\expec X).
\end{equation}
In general, the score functions in \eqref{eq:effScoreBetaJoint} and
\eqref{eq:effScoreBetaUniv} do not coincide, which suggests that
efficiency gains are possible from joint regression analyses. The
numerical size of the gains would have to be investigated further. One
interesting observation is that, at Gaussian marginals, the efficient
score functions in \eqref{eq:effScoreBetaJoint} and \eqref
{eq:effScoreBetaUniv} do coincide as, again, follows from
straightforward calculations detailed in the supplement. This result is
well-known from the literature on Seemingly Unrelated Regressions [see
Chapter~12 in \citet{Davidson:MacKinnon:2004}], but does not extend to
non-Gaussian distributions. Indeed, obtaining more efficient estimators
of regression parameters by joint analyses using copulas was exactly
the point in \citet{song:2000} and \citet{song:li:yuan:2008}.

\section*{Acknowledgments}

The authors wish to thank the referees for constructive suggestions and
for pointing out related literature, in particular, regarding the
extension to joint regression analyses. The authors gratefully
acknowledge extensive discussions with John H.~J. Einmahl (Tilburg
University) and Christian Genest (McGill University).

%
\begin{supplement}[id=suppA]
\stitle{Supplement to the paper: ``Semiparametric Gaussian copula models''}
\slink[doi]{10.1214/14-AOS1244SUPP} 
\sdatatype{.pdf}
\sfilename{aos1244\_supp.pdf}
\sdescription{The supplement contains the proofs for the results in
this paper as well as some additional figures for the Monte Carlo
simulations reported in Section~\ref{SEC:EXAMPLES}.}
\end{supplement}

%


\printaddresses
\end{document}